\newcommand{\driverOption}{}
  \renewcommand{\driverOption}{pdftex}
  \renewcommand{\driverOption}{dvips}
\newcommand{\hyperrefDriverOption}{}
	\renewcommand{\hyperrefDriverOption}{pdftex}
	\renewcommand{\hyperrefDriverOption}{hypertex}
	\newcommand{\TM}[1]{\marginpar{\parbox{4cm}{{\small {\bf TM:} #1}}}}
	\newcommand{\JN}[1]{\marginpar{\parbox{4cm}{{\small {\bf JN:} #1}}}}
	\newcommand{\TM}[1]{}
	\newcommand{\JN}[1]{}
\newtheorem{theorem}{Theorem}
\newtheorem{lemma}[theorem]{Lemma}
\theoremstyle{definition}
\theoremstyle{remark}
\begin{document}

\begin{center}

\renewcommand{\thefootnote}{\fnsymbol{footnote}}

\LARGE A constant-time algorithm for middle levels Gray codes\footnote{An extended abstract of this paper appeared in the Proceedings of the 28th Annual ACM-SIAM Symposium on Discrete Algorithms, {SODA} 2017 \cite{DBLP:conf/soda/MutzeN17}.
Torsten M\"utze is also affiliated with Charles University, Faculty of Mathematics and Physics, and was supported by Czech Science Foundation grant GA~19-08554S, and by German Science Foundation grant~413902284.}

\vspace{2mm}

\small

\begingroup
\begin{tabular}{l@{\hspace{2em}}l}
  \Large Torsten M\"utze & \Large Jerri Nummenpalo \\[2mm]
  Institut f\"ur Mathematik & Department of Computer Science \\
  TU Berlin & ETH Z\"urich \\
  10623 Berlin, Germany & 8092 Z\"urich, Switzerland \vspace{.05mm} \\
  {\small {\tt muetze@math.tu-berlin.de}} & {\small {\tt njerri@inf.ethz.ch}}
\end{tabular}%
\endgroup

\vspace{5mm}

\small

\begin{minipage}{0.8\linewidth}
\textsc{Abstract.}
For any integer~$n\geq 1$, a \emph{middle levels Gray code} is a cyclic listing of all $n$-element and $(n+1)$-element subsets of $\{1,2,\ldots,2n+1\}$ such that any two consecutive sets differ in adding or removing a single element.
The question whether such a Gray code exists for any~$n\geq 1$ has been the subject of intensive research during the last 30 years, and has been answered affirmatively only recently [T.~M\"utze. Proof of the middle levels conjecture. \textit{Proc. London Math. Soc.}, 112(4):677--713, 2016].
In a follow-up paper [T.~M\"utze and J.~Nummenpalo. An efficient algorithm for computing a middle levels Gray code. \textit{ACM Trans. Algorithms}, 14(2):29~pp., 2018] this existence proof was turned into an algorithm that computes each new set in the Gray code in time~$\cO(n)$ on average.
In this work we present an algorithm for computing a middle levels Gray code in optimal time and space: each new set is generated in time~$\cO(1)$ on average, and the required space is~$\cO(n)$.
\end{minipage}

\vspace{2mm}

\begin{minipage}{0.8\linewidth}
\textsc{Keywords:} Middle levels conjecture, Gray code, Hamilton cycle
\end{minipage}

\vspace{2mm}

\end{center}

\vspace{5mm}

\section{Introduction}

Efficiently generating all objects in a particular combinatorial class such as permutations, subsets, partitions, trees, strings etc.\ is one of the oldest and most fundamental algorithmic problems.
Such generation algorithms are used as building blocks in a wide range of practical applications; the survey~\cite{MR1491049} lists numerous references.
In fact, more than half of the most recent volume of Knuth's seminal series \emph{The Art of Computer Programming}~\cite{MR3444818} is devoted to this fundamental subject.
The ultimate goal for these problems is to come up with algorithms that generate each new object in constant time, entailing that consecutive objects may differ only in a constant amount.
For such an algorithm, `generating an object' means constructing a suitable representation of the object in memory.
In an actual application, each such construction step would be followed by a call to a function that utilizes the object for some user-defined purpose, such as computing the value of an objective function to be optimized.
After an object is constructed in memory, the memory can be reused and modified for storing the next object.
`Constant time per object' means that the total time (=arithmetic complexity) spent by the algorithm for generating all objects, divided by the number of objects generated, is a constant.
Typically, the number of objects is exponential in some parameter (e.g., the number of permutations of $n$ objects is $n!=2^{\Theta(n \log n)}$), and so this quotient should not depend on the parameter.
Such constant-time generation algorithms are known for several combinatorial classes, and many of these results are covered in the classical books~\cite{MR0396274,MR993775}.
To mention some concrete examples, constant-time algorithms are known for the following problems:
\begin{enumerate}[label=(\arabic*),topsep=0mm,leftmargin=7mm]
\item generating all permutations of $[n]:=\{1,2,\ldots,n\}$ by adjacent transpositions \cite{MR0159764,Trotter:1962,MR0502206,MR0464682},
\item generating all subsets of~$[n]$ by adding or removing an element in each step \cite{gray:patent},
\item generating all $k$-element subsets of~$[n]$ by exchanging an element in each step \cite{MR0366085,MR0424386,MR821383,MR782221,MR936104},
\item generating all binary trees with $n$~vertices by rotation operations \cite{MR796208,MR920505,MR1239499},
\item generating all spanning trees of a graph by exchanging an edge in each step \cite{MR0245357,MR0246789,MR0307952}.
\end{enumerate}

In this paper we revisit the well-known problem of generating all $n$-element and $(n+1)$-element subsets of~$[2n+1]$ by adding or removing a single element in each step.
In a computer these subsets are naturally represented by bitstrings of length~$2n+1$, with 1-bits at the positions of the elements contained in the set and 0-bits at the remaining positions.
Consequently, the problem is equivalent to generating all bitstrings of length~$2n+1$ with weight~$n$ or~$n+1$, where the \emph{weight} of a bitstring is the number of 1s in it.
We refer to such a Gray code as as \emph{middle levels Gray code}.
Clearly, a middle levels Gray code has $N:=\binom{2n+1}{n}+\binom{2n+1}{n+1}=2^{\Theta(n)}$ many bitstrings in total, and the weight alternates between~$n$ and~$n+1$ in each step.
The existence of a middle levels Gray code for any~$n\geq 1$ is asserted by the well-known \emph{middle levels conjecture}, raised independently in the 80s by Havel~\cite{MR737021} and Buck and Wiedemann~\cite{MR737262}.
The conjecture has also been attributed to Dejter, Erd{\H{o}}s, Trotter~\cite{MR962224} and various others, it appears in the popular books~\cite{MR2034896,MR3444818,MR2858033}, and it is mentioned in Gowers' recent expository survey on Peter Keevash's work~\cite{MR3584100}.
The middle levels conjecture has attracted considerable attention over the last 30~years~\cite{savage:93,MR1350586,MR1329390,MR2046083,MR962223,MR962224,MR1268348,MR2195731,MR2609124,MR2946389,MR2548541,shimada-amano}, and a positive solution, i.e., an existence proof for a middle levels Gray code for any~$n\geq 1$, has been announced only recently.

\begin{theorem}[\cite{MR3483129,mlc-short:18}]
\label{thm:middle-levels}
A~middle levels Gray code exists for any~$n\geq 1$.
\end{theorem}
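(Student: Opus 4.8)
The plan is to recast the statement as a Hamilton cycle problem and then attack it by a ``cycle factor plus gluing'' strategy that respects the natural cyclic symmetry. A middle levels Gray code is exactly a Hamilton cycle in the \emph{middle levels graph}~$M_n$, whose vertices are the bitstrings of length~$2n+1$ and weight~$n$ or~$n+1$ and whose edges join two bitstrings differing in a single bit. The cyclic group~$\mathbb{Z}_{2n+1}$ acts on~$M_n$ by rotating bitstrings, and the vertex count is $N=2(2n+1)C_n$, where $C_n=\frac{1}{n+1}\binom{2n}{n}$ is the $n$-th Catalan number; in particular $2n+1$ divides~$N$, which makes it reasonable to aim for a Hamilton cycle that is \emph{invariant} under this action, so that the whole construction can be pushed down to the (much smaller) quotient graph on necklaces. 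A naive induction on~$n$, trying to lift a Hamilton cycle of~$M_{n-1}$ to~$M_n$, does not seem to work --- there is no obvious such reduction --- which is part of why the conjecture resisted for three decades.

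Next I would fix an explicit \emph{2-factor} of~$M_n$, i.e.\ a spanning union of vertex-disjoint cycles, defined by a purely local rule: for a weight-$n$ vertex~$x$, declare its two neighbours in the 2-factor to be the bitstrings obtained by flipping two particular 0-bits of~$x$ whose positions are read off from the cyclic bracket matching of~$x$, with 1s playing the role of opening and 0s of closing parentheses; since the weight is~$n$ out of~$2n+1$, exactly one bracket is unmatched, which pins the rule down. This rule is rotation-equivariant, hence the 2-factor descends to the quotient, and its cycles carry a transparent combinatorial description, being indexed by a Catalan-type family of objects (for instance Dyck paths, or plane trees with~$n$ edges, taken up to rotation). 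Crucially, this 2-factor already covers every vertex of~$M_n$; the only defect is that it consists of many cycles instead of one.

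The heart of the argument is therefore a gluing step. Two vertex-disjoint cycles~$C$ and~$C'$ of the 2-factor can be merged into a single cycle on $V(C)\cup V(C')$ whenever they admit a \emph{flippable pair}: a pair of edges, one in each cycle, sitting in matching local configurations, so that deleting these two edges and inserting the two ``crossing'' edges turns the two cycles into one. I would then build the auxiliary \emph{gluing graph}~$\mathcal{G}$ whose vertices are the cycles of the 2-factor and whose edges are the flippable pairs, and prove that~$\mathcal{G}$ is connected; any spanning tree of~$\mathcal{G}$ prescribes a sequence of local modifications that fuses all cycles into a single Hamilton cycle. To keep $\mathbb{Z}_{2n+1}$-invariance one performs the gluings simultaneously along whole orbits of flippable pairs, so that what really has to be connected is the quotient of~$\mathcal{G}$ under the rotation action.

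The main obstacle is exactly this connectivity statement. The first steps are essentially bookkeeping, but proving that any two cycles of the 2-factor are joined by a chain of flippable pairs demands a genuine understanding of how the cycles interlock --- where flips are available, and how applying one flip reshapes the rest of the structure. I would expect to need an induction on~$n$, peeling off part of each bitstring, together with an explicit bijective encoding of the cycles: first show that the gluing graph restricted to a well-chosen sub-family of cycles is connected, then show that every remaining cycle attaches to that sub-family. Controlling the interplay of the bracket-matching rule, the rotation action, and the flippability condition all at once is the delicate point, and is presumably where most of the effort in the cited proofs goes.
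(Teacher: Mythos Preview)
Your high-level plan --- build an explicit 2-factor of the middle levels graph whose cycles are indexed by plane trees, identify local ``flippable pairs'' that merge two cycles into one, form an auxiliary graph on the cycles, and prove that a spanning tree of this auxiliary graph exists --- is exactly the strategy the paper (following the short proof it cites) carries out. So at the architectural level you have it right.

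Two points where your sketch diverges from what the paper actually does are worth flagging. First, you organise everything around the $\mathbb{Z}_{2n+1}$ action by cyclic bitstring rotation; the paper instead peels off the \emph{last} bit, writes $G_n$ as two copies of a graph $H_n$ glued along a perfect matching, and builds the 2-factor from explicit paths in~$H_n$ given by a closed-form flip sequence~$\sigma$. The symmetry exploited is $\ol{\rev}$ (reverse-and-complement) rather than rotation. Both setups end with cycles labelled by plane trees, but the paper's path description is what makes the algorithm constant-time per step, whereas your rotation-equivariant rule is closer in spirit to the original long proof. Second, and more importantly, for the connectivity of the gluing graph you reach for an induction on~$n$; the paper avoids induction entirely. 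It assigns to each plane tree a signature $(\ell,t,d)$ = (number of leaves, number of non-terminal leaves, maximum degree), shows that every gluing edge strictly increases this triple lexicographically (so the gluing graph is acyclic and each node has out-degree at most one), and then checks directly that every plane tree other than the star has an outgoing edge. Hence all nodes funnel to the star, giving a spanning tree in one clean stroke. This potential-function argument is both shorter and what makes the $\IsFlipTree$ test in the algorithm so simple; an inductive proof would likely not yield that.
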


In a follow-up paper~\cite{muetze-nummenpalo:18}, this existence argument was turned into an algorithm for computing a middle levels Gray code.

\begin{theorem}[\cite{muetze-nummenpalo:18}]
\label{thm:algo1}
There is an algorithm, which for a given bitstring of length~$2n+1$, $n\geq 1$, with weight~$n$ or~$n+1$ computes the next~$\ell\geq 1$ bitstrings in a middle levels Gray code in time~$\cO(\ell n+n^2)$.
\end{theorem}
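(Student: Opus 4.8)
The plan is to turn the existence proof behind Theorem~\ref{thm:middle-levels} into an explicit \emph{successor rule} for the Hamilton cycle and to bound the cost of evaluating it. Recall that in that proof the Hamilton cycle~$H$ in the middle levels graph is not built directly but is obtained by rewiring a combinatorially defined 2-factor~$\mathcal F$: the cycles of~$\mathcal F$ are indexed by certain lattice paths (Dyck paths, equivalently plane trees, with~$n$ up-steps), a vertex~$x$ --- a bitstring of length~$2n+1$ and weight~$n$ or~$n+1$ --- is assigned to its cycle~$C(x)$ by a bracket-matching scan of~$x$, and the two neighbours of~$x$ on~$C(x)$ are obtained by flipping one $0$-bit and one $1$-bit of~$x$ whose positions are read off from that matching. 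My first step would be to record that this data is \emph{locally computable}: from~$x$ one obtains a canonical encoding of~$C(x)$ and both $\mathcal F$-neighbours of~$x$ by a single left-to-right pass maintaining a stack, i.e.\ in time and space~$\cO(n)$.

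Next I would make the rewiring explicit. The flips that fuse~$\mathcal F$ into the single cycle~$H$ are organized by a spanning tree~$T$ on the set of cycles of~$\mathcal F$: each tree edge~$\{C,C'\}$ records a pair of $\mathcal F$-edges, one in~$C$ and one in~$C'$, whose exchange merges the two cycles, and performing all these exchanges produces~$H$. Following~$H$ then amounts to a depth-first traversal of~$T$ in which every node~$C$ is expanded into a walk around the cycle~$C$, the walk being interrupted at the designated flip edges to descend into a child cycle or to return to the parent. The facts I would need to establish --- all consequences of the recursive manner in which~$\mathcal F$, the flip edges, and~$T$ are defined on the indexing paths --- are: (i) each cycle~$C$ carries only~$\cO(n)$ flip edges, and given~$x\in C$ one can decide in time~$\cO(n)$ whether the edge leaving~$x$ along~$C$ is a flip edge and, if so, name the incident cycle~$C'$, the vertex of~$C'$ at which~$H$ continues, and whether this is a descent or an ascent; (ii) the parent of a given cycle in~$T$ is computable from its encoding in time~$\cO(n)$; and (iii)~$T$ has depth~$\cO(n)$.

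Granting these, the algorithm is immediate. On input~$x$, compute~$C(x)$ and then, iterating~(ii), the branch $C(x)=C_0,C_1,\dots,C_d$ of~$T$ up to the root; by~(iii) we have $d=\cO(n)$, so this costs~$\cO(n^2)$. Keep this branch on a stack, which throughout the run holds the current root-to-cycle path of the traversal together with the flip edge through which each listed cycle was entered (this records the orientation in which the cycle is being traversed and is fixed once we fix an orientation of~$H$). Then loop: output the current bitstring; by~(i) test whether the $\mathcal F$-edge leaving it is a flip edge; if not, step to the $\mathcal F$-successor on the current cycle; if so, perform the flip --- pushing the new child cycle, or popping to the parent --- and jump to the appropriate vertex of the neighbouring cycle. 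Each pass of the loop runs in~$\cO(n)$, so generating the next~$\ell$ bitstrings after the setup costs~$\cO(\ell n)$, for a total of~$\cO(\ell n+n^2)$.

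The step I expect to be the bottleneck is~(i): showing that at a flip edge the neighbouring cycle, the re-entry vertex, and the descent/ascent flag can be recovered from~$x$ alone in~$\cO(n)$ time --- that is, that one full transition of the depth-first traversal of the \emph{exponentially large} tree~$T$ can be simulated purely locally, without ever materializing~$T$. This rests on a fine analysis of how the flip edges are distributed along the cycles of~$\mathcal F$ and of the recursion defining~$T$, and is the technical heart of the argument; once it is in place, the remaining pieces (the $\cO(n^2)$ initialization of the stack, the $\cO(n)$-time per-step updates, and the verification that orientations stay consistent) are routine.
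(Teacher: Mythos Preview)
This theorem is not proved in the present paper: it is quoted from~\cite{muetze-nummenpalo:18} as prior work, and the paper only sketches in Section~1.4 what that algorithm looks like before moving on to the improved $\cO(\ell+n)$ algorithm of Theorem~\ref{thm:algo2}. So there is no proof here to compare your proposal against directly. That said, two remarks on your outline in light of what the paper does say.

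First, the architecture you describe---maintain an explicit stack recording the root-to-current-cycle path in the spanning tree~$T$ on the cycles of the 2-factor, and traverse the Hamilton cycle as a DFS of~$T$---differs from what the paper reports about the cited algorithm. According to Section~1.4, the algorithm behind Theorem~\ref{thm:algo1} is essentially \emph{stateless}: a recursion that from any bitstring computes the next one in time~$\Theta(n)$, together with high-level routines invoked every~$\Theta(n)$ steps at cost~$\Theta(n^2)$ each. No stack of ancestor cycles is carried; the flip decisions are recovered from the current bitstring alone (this is exactly what the function $\IsFlipTree$ in the present paper accomplishes, now in~$\cO(n)$ time). Your stack-based variant is a legitimate alternative design, but it is not the route the cited paper takes, and it shifts the burden onto different structural facts about~$T$.

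Second, two of your three unproved claims deserve scrutiny. In~(i) you assert that each cycle~$C$ carries only~$\cO(n)$ flip edges; this is doubtful, since the number of flip edges on a 2-factor cycle is governed by the degree of the corresponding node in~$T$, which need not be~$\cO(n)$. Fortunately you never actually use this count---what your loop needs is only the per-vertex test in time~$\cO(n)$, and that part is fine. More seriously, claim~(iii), that~$T$ has depth~$\cO(n)$, is what makes your initialization cost~$\cO(n^2)$. The signature argument used in the paper to show~$\cT_n$ is acyclic bounds directed paths via the triple (number of leaves, number of non-terminal leaves, maximum degree), but the bound it yields na\"ively is~$\cO(n^2)$, not~$\cO(n)$, because the middle coordinate can be reset when the first increases. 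You would need to sharpen this to make your initialization bound go through; as written, this is a genuine gap in the proposal.
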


Clearly, the running time of this algorithm is~$\cO(n)$ on average per generated bitstring for $\ell=\Omega(n)$.
However, this falls short of the optimal $\cO(1)$~time bound one could hope for, given that in each step only a single bit needs to be flipped, which is a constant amount of change.

\subsection{Our results}

In this paper we present an algorithm for computing a middle levels Gray code in optimal time and space.

\begin{theorem}
\label{thm:algo2}
There is an algorithm, which for a given bitstring of length~$2n+1$, $n\geq 1$, with weight~$n$ or~$n+1$ computes the next~$\ell\geq 1$ bitstrings in a middle levels Gray code in time~$\cO(\ell+n)$.
\end{theorem}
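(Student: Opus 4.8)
The plan is to refine the $\cO(\ell n + n^2)$ algorithm of Theorem~\ref{thm:algo1} by examining precisely where the $\cO(n)$ per-step and $\cO(n^2)$ preprocessing costs arise, and eliminating them via a more careful data-structure-based bookkeeping. The algorithm of~\cite{muetze-nummenpalo:18} constructs the middle levels Gray code by gluing together many small paths, each living in a ``local'' region of the middle levels graph that is governed by a bijection between bitstrings of weight $n$ or $n+1$ and certain lattice paths / Dyck-type words; the expensive operations are (i) the initial $\cO(n^2)$ setup that parses the input bitstring and locates it inside this decomposition, and (ii) the $\cO(n)$ work done per step to recompute the ``state'' — essentially a rank within one of these local paths together with some auxiliary pointers into the bitstring. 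First I would fix the combinatorial skeleton: recall the hierarchical decomposition of the Gray code into blocks, the rule that tells us, given the current bitstring and a small amount of state, which single bit to flip next, and the rule for transitioning between consecutive blocks.

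The key steps, in order, are as follows. First, define an augmented state consisting of the current bitstring together with $\cO(n)$ words of auxiliary information: the index of the current block, the position within that block's local path, and a constant number of pointers marking distinguished positions in the bitstring (e.g.\ the boundaries of the currently active sub-word, matched-bracket partners, and the location of the next flip). Second, show that from this augmented state the next flip position can be determined in $\cO(1)$ time — this is the heart of the matter and amounts to proving that the local flip rule, when supplied with the right precomputed pointers, never needs to rescan the string. Third, show that updating the augmented state after a flip also costs $\cO(1)$ amortized: within a block the pointers move monotonically, so over the life of a block the total pointer movement is $\cO(n)$, and a standard amortization (or an explicit potential argument) spreads this over the $\Theta(n)$ steps of the block, giving $\cO(1)$ per step; the occasional block transition, which does $\cO(n)$ work to re-initialize pointers, happens only once per $\Theta(n)$ steps and is likewise amortized away. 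Fourth, handle the one-time startup: given an arbitrary input bitstring of weight $n$ or $n+1$, parse it in $\cO(n)$ time to recover the containing block and the initial pointer set — this replaces the $\cO(n^2)$ preprocessing of Theorem~\ref{thm:algo1} and is where the additive $n$ in the bound $\cO(\ell+n)$ comes from. Summing, generating $\ell$ consecutive bitstrings costs $\cO(n)$ for startup plus $\cO(1)$ amortized per step, i.e.\ $\cO(\ell+n)$; the space is dominated by the bitstring and the $\cO(n)$ pointers.

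The main obstacle I anticipate is Step~2 together with the amortization in Step~3: one must argue that the local flip rule is genuinely ``first-order,'' meaning every query it makes about the current bitstring — find the next $1$ after position $p$, find the matching bracket, etc.\ — can be answered either directly from a maintained pointer or by advancing a pointer that only ever moves forward within the block. If some query required a non-monotone scan, the amortization would break. I would handle this by a careful case analysis of the transition rules inherited from~\cite{muetze-nummenpalo:18}, classifying the $\cO(1)$ many types of steps and verifying for each that the pointers suffice; the block-transition case needs separate care to confirm its $\cO(n)$ re-initialization cost is incurred at most $\cO(\ell/n + 1)$ times over a run of length $\ell$, so that it contributes only $\cO(\ell + n)$ in total. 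A secondary, more technical point is making the $\cO(n)$-time parsing in Step~4 explicit, since the decomposition is recursive; but because the recursion depth is bounded and each level's contribution to the parse is linear-time-recoverable, this should go through without surprises.
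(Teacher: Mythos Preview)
Your high-level amortization shape---do $\cO(n)$ work at block boundaries that occur every $\Theta(n)$ steps, and $\cO(1)$ work per step inside a block---is exactly the cost structure the paper achieves. But the paper does \emph{not} get there by adding pointer data structures to the recursion of~\cite{muetze-nummenpalo:18}; it discards that recursion entirely and replaces it with a different combinatorial mechanism. Concretely, at each ``special'' vertex $x0$ with $x\in D_n$ (encountered every $4n+2$ steps) the algorithm computes in one shot, via a closed-form recursion~$\sigma(x)$ on the Dyck-path decomposition of~$x$, the entire length-$\Theta(n)$ sequence of bit positions to flip until the next special vertex; between special vertices it simply reads this precomputed array. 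The $\cO(n)$ block work also includes a function \textsc{IsFlipTree} that decides, using tree-centroid computation and Booth's lexicographically-smallest-rotation algorithm, whether to apply a locally modified flip sequence~$\tsigma$ (this replaces certain pairs of paths by ``flipped'' pairs via a 6-cycle symmetric difference). Correctness is then a separate structural argument: the cycles of the resulting 2-factor correspond to plane trees, and the flip choices realize a spanning tree on those plane trees, so the 2-factor collapses to a single Hamilton cycle.

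The genuine gap in your proposal is Step~2/3: you assert that the old algorithm's per-step rule is ``first-order'' and can be served by monotone pointers, but you give no evidence for this, and the paper's authors explicitly describe that recursion as ``rather unwieldy'' and chose to replace it rather than speed it up. Without exhibiting the specific pointer invariants and verifying them against the actual transition rules of~\cite{muetze-nummenpalo:18}, your Step~2 is a hope, not a proof. Your $\cO(n)$ startup (Step~4) is right in spirit and matches the paper's \textsc{Init} function, which parses the input bitstring via an explicit lattice-path decomposition in linear time.
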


Clearly, the running time of this algorithm is~$\cO(1)$ on average per generated bitstring for $\ell=\Omega(n)$, and the required initialization time~$\cO(n)$ and the required space~$\cO(n)$ are also optimal.

We implemented our new middle levels Gray code in C++, and we invite the reader to experiment with this code, which can be found and run on the Combinatorial Object Server website~\cite{cos}.
As a benchmark, we used this code to compute a middle levels Gray code for~$n=19$ in 20~minutes on a standard desktop computer.
This is by a factor of~$72$ faster than the 24~hours reported in~\cite{muetze-nummenpalo:18} for the algorithm from Theorem~\ref{thm:algo1}, and by four orders of magnitude faster than the 164~days previously needed for a brute-force search~\cite{shimada-amano}.
Note that a middle levels Gray code for~$n=19$ consists of $N=137.846.528.820\approx 10^{11}$ bitstrings.
For comparison, a program that only consists of a loop with a counting variable running from $1,\ldots,N$ and nothing else was only by a factor of~$5$ faster (4~minutes) than our middle levels Gray code computation on the same hardware.
Roughly speaking, we need about 5~arithmetic operations for producing the next bitstring in the Gray code.

We now also obtain efficient algorithms for a number of related Gray codes that have been constructed using Theorem~\ref{thm:middle-levels} as an induction basis.
These Gray codes consist of several combined middle levels Gray codes of smaller dimensions.
Specifically, it was a long-standing problem (see \cite{MR1152123,MR1271867,MR1778200,MR1999733}) to construct a Gray code that lists all $k$-element and $(n-k)$-element subsets of~$[n]$, where $n\geq 2k+1$, by either adding or removing $n-2k$~elements in each step.
This was solved in~\cite{MR3759914}, and using Theorem~\ref{thm:algo2} this construction can now be turned into an efficient algorithm.
Moreover, in~\cite{MR3758308} Theorem~\ref{thm:algo2} is used to derive constant-time algorithms for generating minimum-change listings of all $n$-bit strings whose weight is in some interval~$[k,l]$, $0\leq k\leq l\leq n$, a far-ranging generalization of the middle levels conjecture and the classical problems~(2) and~(3) mentioned before.

In this work we restrict our attention to computing one particular `canonical' middle levels Gray code for any~$n\geq 1$, even though we know from~\cite{MR3483129} that there are double-exponentially many different ones (recall~\cite[Remark~3]{muetze-nummenpalo:18}).

\subsection{Making the algorithm loopless}

We shall see that most steps of our algorithm require only constant time~$\cO(1)$ in the worst case to generate the next bitstring, but after every sequence of~$\Theta(n)$ such `fast' steps, a `slow' step which requires time~$\Theta(n)$ is encountered, yielding constant average time performance.
Therefore, our algorithm could easily be transformed into a loopless algorithm, i.e., one with a~$\cO(1)$ \emph{worst case} bound for each generated bitstring, by introducing an additional FIFO queue of size~$\Theta(n)$ and by simulating the original algorithm such that during every sequence of~$d$ `fast' steps, $d-1$ results are stored in the queue and only one of them is returned, and during the `slow' steps the queue is emptied at the same speed.
For this the constant~$d$ must be chosen so that the queue is empty when the `slow' steps are finished.
This idea of delaying the output to achieve a loopless algorithm is also used in~\cite{DBLP:conf/fun/HerterR16} (see also~\cite[Section~1]{MR0464682}).
Even though the resulting algorithm would indeed be loopless, it would still be slower than the original algorithm, as it produces every bitstring only after it was produced in the original algorithm, due to the delay caused by the queue and the additional instructions for queue management.
In other words, the hidden constant in the $\cO(1)$~bound for the modified algorithm is higher than for the original algorithm, so this loopless algorithm is only of theoretical interest, and we will not discuss it any further.

\subsection{Ingredients}

Our algorithm for computing a middle levels Gray code implements the strategy of the short proof of Theorem~\ref{thm:middle-levels} presented in~\cite{mlc-short:18}.
In the most basic version, the algorithm computes several short cycles that together visit all bitstrings of length~$2n+1$ with weight~$n$ or~$n+1$.
We then modify a few steps of the algorithm so that these short cycles are joined to a Gray code that visits all bitstrings consecutively.

Let us briefly discuss the main differences between the algorithms from Theorem~\ref{thm:algo1} and Theorem~\ref{thm:algo2} and the improvements that save us a factor of~$n$ in the running time.
At the lowest level, the algorithm from Theorem~\ref{thm:algo1} consists of a rather unwieldy recursion, which for any given bitstring computes the next one in a middle levels Gray code.
This recursion runs in time~$\Theta(n)$, and therefore represents one of the bottlenecks in the running time.
In addition, there are various high-level functions that are called every $\Theta(n)$~many steps and run in time~$\Theta(n^2)$, and which therefore also represent $\Theta(n)$~bottlenecks.
These high-level functions control which subsets of bitstrings are visited in which order, to make sure that each bitstring is visited exactly once.

To overcome these bottlenecks, we replace the recursion at the lowest level by a simple combinatorial technique, first proposed in~\cite{MR3738156} and heavily used in the short proof of Theorem~\ref{thm:middle-levels} presented in~\cite{mlc-short:18}.
This technique allows us to compute for certain `special' bitstrings that are encountered every $\Theta(n)$~many steps, a sequence of bit positions to be flipped during the next $\Theta(n)$~many steps.
Computing such a flip sequence can be done in time~$\Theta(n)$, and when this is accomplished each subsequent step takes only constant time:
We simply flip the precomputed positions one after the other, until the next `special' bitstring is encountered and the flip sequence has to be recomputed.
The high-level functions in the new algorithm are very similar as in the old one.
We cut down their running time by a factor of~$n$ (from quadratic to linear) by using more sophisticated data structures and by resorting to well-known algorithms such as Booth's linear-time algorithm~\cite{MR585391} for computing the lexicographically smallest rotation of a given string.

\subsection{Outline of this paper}

In Section~\ref{sec:basic-defs} we introduce important definitions that will be used throughout the paper.
In Section~\ref{sec:algorithm} we present our new middle levels Gray code algorithm.
In Section~\ref{sec:correctness} we prove the correctness of the algorithm, and in Section~\ref{sec:running-time} we discuss how to implement it to achieve the claimed runtime and space bounds.

\section{Preliminaries}
\label{sec:basic-defs}

\textit{Operations on sequences and bitstrings.}
We let $(a_1,\ldots,a_n)$ denote the sequence of integers $a_1,\ldots,a_n$.
We generalize this notation allowing~$a_i$ to be itself an integer sequence: In that case, if $a_i=(b_1,\ldots,b_m)$, then $(a_1,\ldots,a_n)$ is shorthand for $(a_1,\ldots,a_{i-1},b_1,\ldots,b_m,a_{i+1},\ldots,a_n)$.
The empty integer sequence is denoted by~$()$.
For any sequence~$a$, we let~$|a|$ denote its length.
For any integer~$k\geq 0$ and any bitstring~$x$, we write~$x^k$ for the concatenation of $k$~copies of~$x$.
Moreover, $\rev(x)$ denotes the reversed bitstring, and~$\ol{x}$ denotes the bitstring obtained by taking the complement of every bit in~$x$.
We also define $\ol{\rev}(x):=\rev(\ol{x})=\ol{\rev(x)}$.
For any graph~$G$ whose vertices are bitstrings and any bitstring~$x$, we write~$Gx$ for the graph obtained from~$G$ by appending~$x$ to all vertices.

\textit{Bitstrings and lattice paths.}
We let~$B_{n,k}$ denote the set of all bitstrings of length~$n$ with weight~$k$.
Any bitstring $x\in B_{n,k}$ can be interpreted as a lattice path as follows; see Figure~\ref{fig:bij}:
We read~$x$ from left to right and draw a path in the integer lattice $\mathbb{Z}_2$ that starts at the origin~$(0,0)$.
For every 1-bit encountered in~$x$, we draw an \emph{$\upstep$-step} that changes the current coordinate by~$(+1,+1)$, and for every 0-bit encountered in~$x$, we draw a \emph{$\downstep$-step} that changes the current coordinate by~$(+1,-1)$.
Note that the resulting lattice path ends at the coordinate~$(n,2k-n)$.
We let~$D_n$ denote the bitstrings from~$B_{2n,n}$ with the property that in every prefix, there are at least as many 1s as 0s.
Moreover, we let~$D_n^-$ denote the bitstrings from~$B_{2n,n}$ that have this property for all but exactly one prefix.
In terms of lattice paths, $D_n$ are the paths with $2n$ steps that end at the abscissa~$y=0$ and that never move below this line, commonly known as \emph{Dyck paths}, whereas~$D_n^-$ are the paths that move below the abscissa~$y=0$ exactly once.
It is well-known that $|D_n|=|D_n^-|$ and that this quantity is given by the $n$th Catalan number.
We also define $D:=\bigcup_{n\geq 0}D_n$.
Any nonempty~$x\in D$ can be written uniquely as $x=1u0v$ with $u,v\in D$.
Similarly, any~$x\in D^-$ can be written uniquely as $x=u01v$ with~$u,v\in D$.
We refer to this as the \emph{canonical decomposition of~$x$}; see Table~\ref{tab:paths-q6}.

\begin{figure}
\centering
\PSforPDF{
 \psfrag{x}{\parbox{5cm}{a bitstring \\ $x=1101101000$}}
 \psfrag{p}{\parbox{3cm}{a lattice path \\ from~$D_{10}$}}
 \psfrag{t}{\parbox{4cm}{a rooted tree}}
 \psfrag{z}{0}
 \psfrag{ten}{10}
 \includegraphics{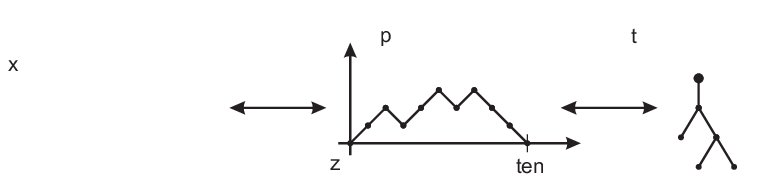}
}
\caption{Bijections between bitstrings and lattice paths (left), and between Dyck paths and rooted trees (right).}
\label{fig:bij}
\end{figure}

\textit{Rooted trees.}
An \emph{(ordered) rooted tree} is a tree with a specified root vertex, and the children of each vertex have a specified left-to-right ordering.
We think of a rooted tree as a tree embedded in the plane with the root on top, with downward edges leading from any vertex to its children, and the children appear in the specified left-to-right ordering.
Using a standard Catalan bijection, every Dyck path~$x\in D_n$ can be interpreted as a rooted tree with $n$~edges; see~\cite{MR3467982} and Figure~\ref{fig:bij}.
We therefore refer to the elements of~$D_n$ also as rooted trees.
Given a rooted tree~$x$, the rotation operation $\trot(x)$ shifts the root to the leftmost child of the root; see Figure~\ref{fig:rot}.
In terms of bitstrings, if $x=1u0v$ is the canonical decomposition of~$x$, then $\trot(x)=u1v0$.

\textit{The middle levels graph~$G_n$.}
We describe our algorithm to compute a middle levels Gray code using the language of graph theory.
We let~$G_n$ denote the \emph{middle levels graph}, which has all bitstrings of length~$2n+1$ with weight~$n$ or~$n+1$ as vertices, with an edge between any two bitstrings that differ in exactly one bit.
Clearly, computing a middle levels Gray code is equivalent to computing a Hamilton cycle in~$G_n$.
We let~$H_n$ denote the graph obtained by considering the subgraph of~$G_n$ induced by all vertices whose last bit equals~0, and by removing the last bit from every vertex.
Note that~$G_n$ consists of a copy of~$H_n\, 0$, a copy of $\ol{\rev}(H_n)\, 1$, plus the matching $M_n:=\{(x0,x1)\mid x\in B_{2n,n}\}$; see Figure~\ref{fig:2factor}.
The matching edges are the edges along which the last bit is flipped.

\section{The algorithm}
\label{sec:algorithm} 

Our algorithm to compute a Hamilton cycle in the middle levels graph~$G_n$ consists of several nested functions (see Algorithm~\ref{alg:hamcycle} below), and in the following we explain these functions from bottom to top.
The low-level functions compute paths in~$G_n$, and the high-level functions combine them to a Hamilton cycle.

\subsection{Computing paths in~$H_n$}
\label{sec:paths}

\begin{figure}
\centering
\PSforPDF{
 \psfrag{x}{$x\in D_n$}
 \psfrag{xb}{\footnotesize $x=111001110011110000001100$}
 \psfrag{sigmax}{\footnotesize $\sigma(x)=(20,1,5,2,4,3,2,4,1,5,19,6,10,7,9,8,7,9,6,10,18,11,17,12,16,13,15,14,13,15,12,16,11,17,10,18,5,19)$}
 \psfrag{n01}{\footnotesize $1$}
 \psfrag{n02}{\footnotesize $2$}
 \psfrag{n03}{\footnotesize $3$}
 \psfrag{n04}{\footnotesize $4$}
 \psfrag{n05}{\footnotesize $5$}
 \psfrag{n06}{\footnotesize $6$}
 \psfrag{n07}{\footnotesize $7$}
 \psfrag{n08}{\footnotesize $8$}
 \psfrag{n09}{\footnotesize $9$}
 \psfrag{n10}{\footnotesize $10$}
 \psfrag{na1}{\footnotesize $11$}
 \psfrag{n12}{\footnotesize $12$}
 \psfrag{n13}{\footnotesize $13$}
 \psfrag{n14}{\footnotesize $14$}
 \psfrag{n15}{\footnotesize $15$}
 \psfrag{n16}{\footnotesize $16$}
 \psfrag{n17}{\footnotesize $17$}
 \psfrag{n18}{\footnotesize $18$}
 \psfrag{n19}{\footnotesize $19$}
 \psfrag{n20}{\footnotesize $20$}
 \psfrag{n21}{\footnotesize $21$}
 \psfrag{n22}{\footnotesize $22$}
 \psfrag{n23}{\footnotesize $23$}
 \psfrag{n24}{\footnotesize $24$}
 \psfrag{n01r}{\color{red}\footnotesize $1$}
 \psfrag{n02r}{\color{red}\footnotesize $2$}
 \psfrag{n03r}{\color{red}\footnotesize $3$}
 \psfrag{n04r}{\color{red}\footnotesize $4$}
 \psfrag{n05r}{\color{red}\footnotesize $5$}
 \psfrag{n06r}{\color{red}\footnotesize $6$}
 \psfrag{n07b}{\color{blue}\footnotesize $7$}
 \psfrag{n08b}{\color{blue}\footnotesize $8$}
 \psfrag{n09b}{\color{blue}\footnotesize $9$}
 \psfrag{n10b}{\color{blue}\footnotesize $10$}
 \psfrag{nar}{\color{red}\footnotesize $11$}
 \psfrag{n12r}{\color{red}\footnotesize $12$}
 \psfrag{n13r}{\color{red}\footnotesize $13$}
 \psfrag{n14r}{\color{red}\footnotesize $14$}
 \psfrag{n15r}{\color{red}\footnotesize $15$}
 \psfrag{n16r}{\color{red}\footnotesize $16$}
 \psfrag{n17b}{\color{blue}\footnotesize $17$}
 \psfrag{n18b}{\color{blue}\footnotesize $18$}
 \psfrag{n19b}{\color{blue}\footnotesize $19$}
 \psfrag{n20b}{\color{blue}\footnotesize $20$}
 \psfrag{n21r}{\color{red}\footnotesize $21$}
 \psfrag{n22r}{\color{red}\footnotesize $22$}
 \psfrag{n23r}{\color{red}\footnotesize $23$}
 \psfrag{n24r}{\color{red}\footnotesize $24$}
 \psfrag{n25r}{\color{red}\footnotesize $25$}
 \psfrag{n26r}{\color{red}\footnotesize $26$}
 \psfrag{n27r}{\color{red}\footnotesize $27$}
 \psfrag{n28r}{\color{red}\footnotesize $28$}
 \psfrag{n29b}{\color{blue}\footnotesize $29$}
 \psfrag{n30b}{\color{blue}\footnotesize $30$}
 \psfrag{n31b}{\color{blue}\footnotesize $31$}
 \psfrag{n32b}{\color{blue}\footnotesize $32$}
 \psfrag{n33b}{\color{blue}\footnotesize $33$}
 \psfrag{n34b}{\color{blue}\footnotesize $34$}
 \psfrag{n35b}{\color{blue}\footnotesize $35$}
 \psfrag{n36b}{\color{blue}\footnotesize $36$}
 \psfrag{n37b}{\color{blue}\footnotesize $37$}
 \psfrag{n38b}{\color{blue}\footnotesize $38$}
 \includegraphics{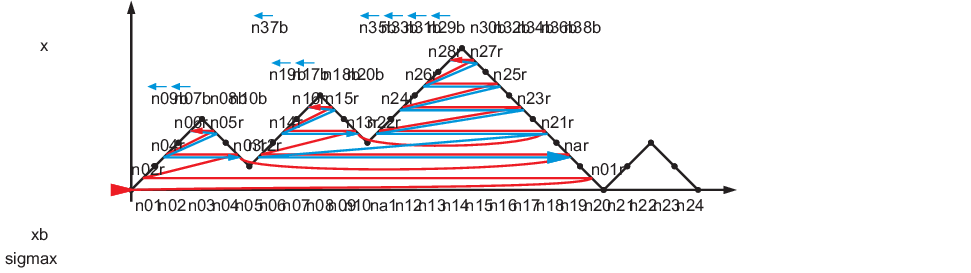}
}
\caption{Illustration of the recursive computation of the flip sequence~$\sigma(x)$.
The number $i=1,2,\ldots,38$ on the Dyck path indicates the position of~$x$ to be flipped in step~$i$, i.e., the $i$th entry of~$\sigma(x)$, where the little left-arrows act as modifiers that change the position drawn in the figure by~$-1$.
E.g., the 11th entry of~$\sigma(x)$ is the position of the $\downstep$-step of~$x$ marked with 11, so $\sigma(11)=19$; the 7th entry of $\sigma(x)$ is the position of the $\upstep$-step of~$x$ marked with 7 minus 1, so $\sigma(7)=3-1=2$.}
\label{fig:sigma}
\end{figure}

In this section we describe a set of disjoint paths~$\cP_n$ that together visit all vertices of the graph~$H_n$.
The starting vertices of these paths are the vertices $x\in D_n$, and in the following we describe a rule~$\sigma(x)$ that specifies the sequence of bit positions to be flipped along the path starting at~$x$.
To compute the flip sequence~$\sigma(x)$ for a given vertex $x\in D_n$, we interpret~$x$ as a Dyck path, and we alternatingly flip $\downstep$-steps and $\upstep$-steps of this Dyck path (corresponding to 0s and 1s in~$x$, respectively).
Specifically, for $x\in D_n$ we consider the canonical decomposition $x=1u0v$ and define $a:=1$, $b:=|u|+2$ and
\begin{subequations}
\label{eq:sigma-rec}
\begin{equation}
\label{eq:sigma}
  \sigma(x):=(b,\, a,\, \sigma_{a+1}(u)) \enspace,
\end{equation}
where~$\sigma_a(x')$ is defined for any substring $x'\in D$ of~$x$ starting at position~$a$ in~$x$ by considering the canonical decomposition $x'=1u'0v'$, by defining $b:=a+|u'|+1$ and by recursively computing
\begin{equation}
\label{eq:sigmap}
  \sigma_a(x') := \begin{cases}
      () & \text{if } |x'|=0 \enspace, \\
      \big(b,\, a,\, \sigma_{a+1}(u'),\, a-1,\, b,\, \sigma_{b+1}(v')\big) & \text{otherwise} \enspace.
      \end{cases}
\end{equation}
\end{subequations}
Note that in~\eqref{eq:sigma}, the integers $a$ and $b$ are the positions of~1 and~0 in the canonical decomposition~$x=1u0v$.
Similarly, in~\eqref{eq:sigmap}, the integers $a$ and $b$ are the positions of~1 and~0 in the canonical decomposition of the substring $x'=1u'0v'$ of~$x$, and those positions are with respect to the entire string~$x$ (the index of $\sigma_a$ tracks the starting position of~$x'$ in~$x$).

In words, the sequence~$\sigma(x)$ defined in \eqref{eq:sigma} first flips the $\downstep$-step immediately after the subpath~$u$ (at position~$b$ of~$x$), then the $\upstep$-step immediately before the subpath~$u$ (at position~$1$), and then recursively steps of~$u$.
No steps of~$v$ are flipped at all.
The sequence~$\sigma_a(x')$ defined in \eqref{eq:sigmap} first flips the $\downstep$-step immediately after the subpath~$u'$ (at position~$b$ of~$x$), then the $\upstep$-step immediately before the subpath~$u'$ (at position~$a$), then recursively steps of~$u'$, then again the step immediately to the left of~$x'$ (which is not part of~$x'$, hence the index~$a-1$), then again the step immediately to the right of~$u'$ (at position~$b$), and finally recursively steps of~$v'$; see Figure~\ref{fig:sigma}.
The recursion~$\sigma(x)$ has a straightforward combinatorial interpretation:
We consider the Dyck subpaths of the Dyck path~$x$ with increasing height levels and from left to right on each level, and we process them in two phases.
In phase~1, we flip the steps of each such subpath alternatingly between the rightmost and leftmost step, moving upwards.
In phase~2, we flip the steps alternatingly between the leftmost and rightmost step, moving downwards again.
We emphasize here that during the recursive computation of~$\sigma(x)$, no steps of~$x$ are ever flipped, but we always consider the same Dyck path and its Dyck subpaths as function arguments.

Note that by the definition \eqref{eq:sigma-rec}, we have $|\sigma(x)|=2|u|+2$, where $x=1u0v$ is the canonical decomposition.
We let~$P_\sigma(x)$ denote the sequence of vertices obtained by starting at the vertex~$x$ and flipping bits one after the other according to the sequence~$\sigma(x)$.
The following properties were proved in~\cite[Proposition~2]{mlc-short:18}.
\begin{enumerate}[label=(\roman*),topsep=0mm,leftmargin=7mm]
\item For any $x\in D_n$, $P_\sigma(x)$ is a path in the graph~$H_n$.
Moreover, all paths in $\cP_n:=\{P_\sigma(x) \mid x\in D_n\}$ are disjoint, and together they visit all vertices of~$H_n$.
\item For any first vertex $x\in D_n$, considering the canonical decomposition $x=1u0v$, the last vertex of~$P_\sigma(x)$ is given by $u01v\in D_n^-$.
Consequently, the sets of first and last vertices of the paths~$\cP_n$ are given by~$D_n$ and~$D_n^-$, respectively.
\end{enumerate}

Table~\ref{tab:paths-q6} shows the five paths in~$\cP_n$ for~$n=3$. 

\begin{table}
\caption{The five paths in~$\cP_n$ in the graph~$H_n$ for~$n=3$ obtained from the flip sequences~$\sigma(x)$, $x\in D_n$.
The gray boxes highlight the (possibly empty) substrings of~$x$ corresponding to the subpaths~$u$ or~$v$ in the canonical decomposition of~$x$.
}
\newcommand{\upart}[1]{\fcolorbox{white}{gray}{\hspace{-0.8mm}#1\hspace{-0.8mm}}}
\newcommand{\vpart}[1]{\fcolorbox{white}{lightgray}{\hspace{-0.8mm}#1\hspace{-0.8mm}}}
\begin{tabular}{|l|l|l|}
\hline
First vertex $x\in D_n$ & Flip sequence~$\sigma(x)$ & Last vertex $x\in D_n^-$ \\ \hline
1\upart{1100}0\vpart{\vphantom{1}} & $(6,1,5,2,4,3,2,4,1,5)$ & \upart{1010}01\vpart{\vphantom{1}} \\
1\upart{1010}0\vpart{\vphantom{1}} & $(6,1,3,2,1,3,5,4,3,5)$ & \upart{1100}01\vpart{\vphantom{1}} \\
1\upart{10}0\vpart{10} & $(4,1,3,2,1,3)$ & \upart{10}01\vpart{10} \\
1\upart{\vphantom{1}}0\vpart{1100} & $(2,1)$ & 0\upart{\vphantom{1}}1\vpart{1100} \\
1\upart{\vphantom{1}}0\vpart{1010} & $(2,1)$ & 0\upart{\vphantom{1}}1\vpart{1010} \\ \hline
\end{tabular}
\label{tab:paths-q6}
\end{table}

\subsection{Flippable pairs}
\label{sec:flippable-pairs}

To compute a Hamilton cycle in the middle levels graphs~$G_n$, we apply small local modifications to certain pairs of paths from~$\cP_n$, giving us additional freedom in combining an appropriate set of paths to a Hamilton cycle.
Specifically, we say that $x,y \in D_n$ form a \emph{flippable pair}~$(x,y)$, if
\begin{equation}
\label{eq:xy}
\begin{aligned}
  x &= 110w0v \enspace, \\
  y &= 101w0v
\end{aligned}
\end{equation}
for some~$w,v\in D$.
In terms of rooted trees, the tree~$y$ is obtained from~$x$ by removing the pending edge that leads to the leftmost child of the leftmost child of the root, and by attaching this edge leftmost to the root; see Figure~\ref{fig:xypair} (recall the correspondence between Dyck paths and rooted trees explained in Section~\ref{sec:basic-defs}).
We denote this operation by~$\tau$ and write~$y=\tau(x)$.
The preimage $T_n\seq D_n$ of the mapping~$\tau$ are all rooted trees with $n$~edges of the form $x=110w0v$, and the image of~$\tau$ are all rooted trees of the form $y=101w0v$, where $w,v\in D$; see Figure~\ref{fig:rot}.
Note that these two sets are disjoint.

\begin{figure}
\includegraphics[scale=0.916]{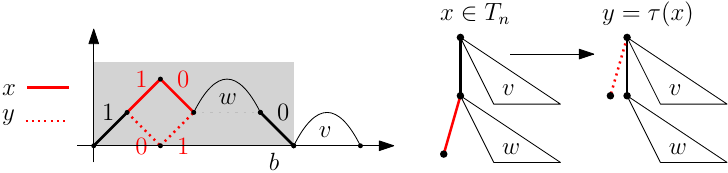}
\caption{A~flippable pair~$(x,y)$ and its Dyck path interpretation (left) and rooted tree interpretation (right).}
\label{fig:xypair}
\end{figure}

\begin{figure}
\includegraphics[scale=0.916]{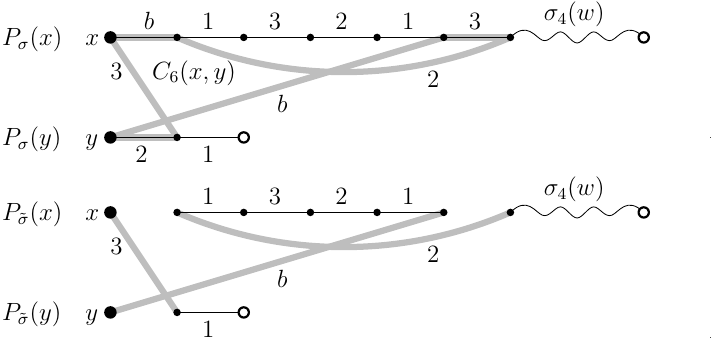}
\caption{
The paths~$P_\sigma(x)$ and~$P_\sigma(y)$ (black) for a flippable pair~$(x,y)$ traverse a common 6-cycle~$C_6(x,y)$ (gray).
The symmetric difference yields paths~$P_\tsigma(x)$ and~$P_\tsigma(y)$ that have interchanged end vertices.
The numbers on the edges indicate the flipped bit positions.
}
\label{fig:c6xy}
\end{figure}

Evaluating the recursion \eqref{eq:sigma-rec} for the bitstrings in a flippable pair~$(x,y)$ as in \eqref{eq:xy} yields
\begin{align*}
 \sigma(x) &= (b,1,3,2,1,3,\sigma_4(w)) \enspace, \\
 \sigma(y) &= (2,1) \enspace,
\end{align*}
where~$b:=|w|+2$.
It follows that the paths~$P_\sigma(x)$ and~$P_\sigma(y)$ intersect a common 6-cycle~$C_6(x,y)$ in the graph~$H_n$ as shown in Figure~\ref{fig:c6xy}.
Specifically, this 6-cycle can be encoded by
\begin{equation}
\label{eq:c6xy}
  C_6(x,y)=1{*}{*}w{*}v \enspace,
\end{equation}
where the six cycle vertices are obtained by substituting the three~$*$s by all six combinations of symbols from~$\{0,1\}$ that use each symbol at least once.
Consequently, taking the symmetric difference of the edge sets of~$P_\sigma(x)$ and~$P_\sigma(y)$ with the 6-cycle~$C_6(x,y)$ yields two paths on the same vertex set as~$P_\sigma(x)$ and~$P_\sigma(y)$, but with interchanged end vertices.
The resulting paths~$P_\tsigma(x)$ and~$P_\tsigma(y)$ have flip sequences
\begin{equation}
\label{eq:tsigma}
\begin{aligned}
  \tsigma(x) &:= (3,1) \enspace, \\
  \tsigma(y) &:= (b,1,2,3,1,2,\sigma_4(w)) \enspace,
\end{aligned}
\end{equation}
and we refer to these two paths as \emph{flipped paths} corresponding to the flippable pair~$(x,y)$.

\subsection{The Hamilton cycle algorithm}
\label{sec:hamcycle}

In this section we present our algorithm to compute a Hamilton cycle in the middle levels graph~$G_n$ (Algorithm~\ref{alg:hamcycle}).
The Hamilton cycle is obtained by combining paths that are computed via the flip sequences~$\sigma$ and~$\tsigma$.
We use the decomposition of~$G_n$ into~$H_n\, 0$, $\ol{\rev}(H_n)\, 1$, plus the matching~$M_n$ discussed in Section~\ref{sec:basic-defs}; see Figure~\ref{fig:2factor}.
By property~(ii) from Section~\ref{sec:paths}, the sets of first and last vertices of the paths~$\cP_n$ are~$D_n$ and~$D_n^-$, respectively.
It is easy to see that these two sets are preserved under the mapping~$\ol{\rev}$.
Together with property~(i) from Section~\ref{sec:paths} it follows that
\begin{equation}
\label{eq:2-factor}
  \cC_n:=\cP_n\, 0 \;\cup\; \ol{\rev}(\cP_n)\, 1 \;\cup\; M_n'
\end{equation}
with $M_n':=\{(x0,x1)\mid x\in D_n\cup D_n^-\}\seq M_n$ is a 2-factor in the middle levels graph.
A~\emph{2-factor} in a graph is a collection of disjoint cycles that together visit all vertices of the graph.
Note that along each of the cycles in the 2-factor, the paths from~$\cP_n\, 0$ are traversed in forward direction, and the paths from~$\ol{\rev}(\cP_n)\, 1$ in backward direction.
Observe also that the definition of flippable pairs given in Section~\ref{sec:flippable-pairs} allows us to replace in the definition \eqref{eq:2-factor} any two paths~$P_\sigma(x)$ and~$P_\sigma(y)$ from~$\cP_n$ for which~$(x,y)$ forms a flippable pair by the corresponding flipped paths~$P_\tsigma(x)$ and~$P_\tsigma(y)$, yielding another 2-factor.
Specifically, if the paths we replace lie on two different cycles, then the replacement will join the two cycles to one cycle.
The final algorithm makes all those choices such that the resulting 2-factor consists only of a single cycle, i.e., a Hamilton cycle.

\begin{figure}
\centering
\PSforPDF{
 \psfrag{cc}{$\cC_n$}
 \psfrag{gn}{$G_n$}
 \psfrag{hn}{$H_n\, 0$}
 \psfrag{hnrev}{$\ol{\rev}(H_n)\, 1$}
 \psfrag{b2n0}{$B_{2n,n}\, 0$}
 \psfrag{b2n1}{$B_{2n,n}\, 1$}
 \psfrag{mn}{$M_n$}
 \psfrag{pn}{$\cP_n$}
 \psfrag{pnrev}{$\ol{\rev}(\cP_n)$}
 \includegraphics{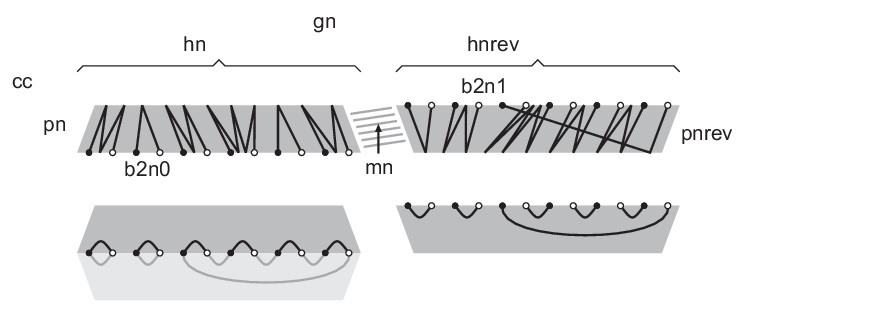}
}
\caption{The top part shows the decomposition of the middle levels graph~$G_n$ and the definition~\eqref{eq:2-factor}.
In this example, the 2-factor~$\cC_n$ consists of three disjoint cycles that together visit all vertices of the graph.
The sets~$D_n$ and~$D_n^-$ of first and last vertices of the paths~$\cP_n$ and~$\ol{\rev}(\cP_n)$ are drawn in black and white, respectively.
The bottom part shows a simplified drawing that depicts the structure of the 2-factor, which consists of two short cycles and one long cycle.}
\label{fig:2factor}
\end{figure}

\begin{algorithm}
\renewcommand\theAlgoLine{H\arabic{AlgoLine}}
\LinesNumbered
\DontPrintSemicolon
\SetEndCharOfAlgoLine{}
\SetNlSty{}{}{}
\SetArgSty{}
\SetKw{KwIf}{if}
\SetKw{KwElse}{else}
\SetKw{KwThen}{then}
\SetKw{KwDownto}{downto}
\caption[Algorithm HamCycle]{\mbox{$\HamCycle(n,x,\ell)$}}
\label{alg:hamcycle}
\vspace{.2em}
\KwIn{An integer~$n\geq 1$, a vertex $x\in G_n$, an integer~$\ell\geq 1$}
\KwResult{Starting from the vertex~$x$, the algorithm visits the next $\ell$~vertices on a Hamilton cycle in~$G_n$}
\vspace{.2em}
$(y,i):=\Init(n,x,\ell)$ \label{line:call-init} \;
\While{$\true$} {
  $y^-:=y_1 y_2\ldots y_{2n}$ \tcc*[r]{ignore last bit of~$y$} \label{line:ym}
  \KwIf $\big((y^-\in T_n \text{ and } \IsFlipTree(y^-)) \text{ or } (y^-\in \tau(T_n) \text{ and } \IsFlipTree(\tau^{-1}(y^-))\big)$ \KwThen $s:=\tsigma(y^-)$ \tcc*[r]{compute flip sequence~$\tsigma$ ...} \label{line:call-tsigma}
  \KwElse $s:=\sigma(y^-)$ \tcc*[r]{... or~$\sigma$} \label{line:call-sigma1}
  \For(\tcc*[f]{flip bits according to sequence~$s$}){$j:=1$ \KwTo $|s|$ \label{line:forward-path}} {
    $y_{s_j} := 1-y_{s_j}$ \tcc*[r]{flip bit at position~$s_j$} \label{line:flip-forward}
    $\Visit(y)$ \label{line:visit1} \;
    \KwIf $(i:=i+1)=\ell$ \Return \label{line:done1} \;
  }
  $y_{2n+1}:=1$ \tcc*[r]{flip last bit to 1} \label{line:jump-to-1copy}
  $\Visit(y)$ \label{line:visit2} \;
  \KwIf $(i:=i+1)=\ell$ \Return \label{line:done2} \;
  $u01v:=y_1 y_2 \ldots y_{2n}$ \tcc*{canonical decomposition of $y_1 \ldots y_{2n}\in D_n^-$} \label{line:decomposition}
  $s:=\sigma(1 \,\ol{\rev}(v)\, 0 \,\ol{\rev}(u))$ \tcc*[r]{compute flip sequence~$\sigma$} \label{line:call-sigma2}
  \For(\tcc*[f]{flip bits according to reverse sequence~$s$}){$j:=|s|$ \KwDownto $1$ \label{line:backward-path}} {
    $y_{2n+1-s_j} := 1-y_{2n+1-s_j}$ \tcc*[r]{flip bit at position $2n+1-s_j$} \label{line:flip-backward}
    $\Visit(y)$ \label{line:visit3} \;
    \KwIf $(i:=i+1)=\ell$ \Return \label{line:done3} \;
  }
  $y_{2n+1}:=0$ \tcc*[r]{flip last bit to 0} \label{line:jump-to-0copy}
  $\Visit(y)$ \label{line:visit4} \;
  \KwIf $(i:=i+1)=\ell$ \Return \label{line:done4} \;
}
\end{algorithm}

For the rest of the paper we will focus on proving that the algorithm $\HamCycle$, described in Algorithm~\ref{alg:hamcycle}, implies Theorem~\ref{thm:algo2}.
$\HamCycle$ is called with three input parameters: $n$~determines the length~$2n+1$ of the bitstrings, $x$~is the starting vertex of the Hamilton cycle and must have weight~$n$ or~$n+1$, and $\ell$~is the number of vertices to visit along the cycle.

The variable~$y$ is the current vertex along the cycle, and the variable~$i$ counts the number of vertices that have already been visited.
The calls $\Visit(y)$ in lines~\ref{line:visit1}, \ref{line:visit2}, \ref{line:visit3} and \ref{line:visit4} indicate where a function using our Hamilton cycle algorithm could perform further operations on the current vertex~$y$.
Each time a vertex along the cycle is visited, we increment~$i$ and check whether the desired number~$\ell$ of vertices has been visited (lines~\ref{line:done1}, \ref{line:done2}, \ref{line:done3} and \ref{line:done4}).

We postpone the definition of the functions $\Init$ and $\IsFlipTree$ called in lines \ref{line:call-init} and \ref{line:call-tsigma} a little bit, and assume for a moment that the input vertex~$x$ of the middle levels graph~$G_n$ has the form $x=z0$ with~$z\in D_n$.
In this case the variables~$y$ and~$i$ will be initialized to~$y:=x$ and~$i:=1$ in line~\ref{line:call-init}.
Let us also assume that the return value of $\IsFlipTree$ called in line~\ref{line:call-tsigma} is always~$\false$.
With these simplifications, the algorithm $\HamCycle$ computes exactly the 2-factor~$\cC_n$ defined in \eqref{eq:2-factor} in the middle levels graph~$G_n$.

Indeed, one complete execution of the first for-loop corresponds to following one path from the set~$\cP_n\, 0$ in the graph~$H_n\, 0$ starting at its first vertex and ending at its last vertex, and one complete execution of the second for-loop corresponds to following one path from the set~$\ol{\rev}(\cP_n)\, 1$ in the graph~$\ol{\rev}(H_n)\, 1$ starting at its last vertex and ending at its first vertex.
At the intermediate steps in lines~\ref{line:jump-to-1copy} and \ref{line:jump-to-0copy}, the last bit is flipped.
These flips correspond to traversing an edge from the matching~$M_n'$.
The paths~$\cP_n$ are computed in lines~\ref{line:call-sigma1} and \ref{line:call-sigma2} using the recursion~$\sigma$, and the resulting flip sequences are applied in the two inner for-loops (line~\ref{line:flip-forward} and \ref{line:flip-backward}).
Note that if a path from the set~$\cP_n$ has $y\in D_n^-$ as a last vertex and if $y=u01v$ is the canonical decomposition, then $\ol{\rev}$~maps the last vertex of the path~$P\in \cP_n$ that has $1 \,\ol{\rev}(v)\, 0 \,\ol{\rev}(u)$ as first vertex onto~$y$.
This is a consequence of property~(ii) from Section~\ref{sec:paths}, from which we obtain that the last vertex of~$P$ is $\ol{\rev}(v)\, 0 1 \,\ol{\rev}(u)$, and applying~$\ol{\rev}$ to this vertex indeed yields~$y$.
From these observations and the definitions in lines~\ref{line:decomposition}, \ref{line:call-sigma2} and \ref{line:flip-backward} it follows that the paths in the second set on the right hand side of \eqref{eq:2-factor} are indeed traversed in backward direction (starting at the last vertex and ending at the first vertex).

We now explain the significance of the function $\IsFlipTree$ called in line~\ref{line:call-tsigma} of our algorithm.
This function interprets the current first vertex $y^-\in T_n\seq D_n$ or $\tau^{-1}(y^-)\in T_n$ as a rooted tree, and whenever it returns~$\true$, then instead of computing the flip sequence~$\sigma(y^-)$ in line~\ref{line:call-sigma1}, the algorithm computes the modified flip sequence~$\tsigma(y^-)$ in line~\ref{line:call-tsigma}.
Consequently, the function $\IsFlipTree$ controls which pairs of paths from~$\cP_n$, whose first vertices form a flippable pair, are replaced by the corresponding flipped paths, so that the resulting 2-factor in the middle levels graph~$G_n$ is a Hamilton cycle.
Observe that these modifications only apply to the set~$\cP_n\, 0$, but not to the set~$\ol{\rev}(\cP_n)\, 1$ on the right hand side of~\eqref{eq:2-factor}.

The function $\IsFlipTree$ therefore encapsulates the core `intelligence' of our Hamilton cycle algorithm.
We define this function and the function $\Init$ in the next two sections.
The correctness proof for the algorithm $\HamCycle$ is provided in Section~\ref{sec:correctness} below.

\subsection{The function \texorpdfstring{$\IsFlipTree$}{IsFlipTree}}
\label{sec:fliptree}

To define the Boolean function $\IsFlipTree$, we need a few more definitions related to trees.

\textit{Leaves, stars, and tree center.}
Any vertex of degree~1 of a tree is called a \emph{leaf}.
We call a leaf \emph{thin}, if its unique neighbor in the tree has degree~2.
A~\emph{star} is a tree in which all but at most one vertex are leaves.
The \emph{center} of a tree is the set of vertices that minimize the maximum distance to any other vertex.
Any tree either has a unique center vertex, or two center vertices that are adjacent.
For a rooted tree, these notions are independent of the vertex orderings.
Also note that the root of a rooted tree can be a leaf.

The following auxiliary function $\troot$ computes a canonically rooted version of a given rooted tree.
Formally, for any tree $x\in D_n$ and any integer~$r\geq 0$ the return value of $\troot(\trot^r(x))$ is the same rotated version of~$x$.
In the following functions, all comparisons between trees are performed using the bitstring representation.

\textit{The function $\troot$:}
Given a tree $x\in D_n$, first compute its center vertex/vertices.
If there are two centers~$c_1$ and~$c_2$, then let~$x'$ be the tree obtained by rooting~$x$ so that $c_1$~is the root and $c_2$~its leftmost child, let~$x''$ be the tree obtained by rooting~$x$ so that $c_2$~is the root and $c_1$~its leftmost child, and return the tree from~$\{x',x''\}$ with the lexicographically smaller bitstring representation.
If the center~$c$ is unique, then let $y_1,y_2,\ldots,y_k$ be the subtrees of~$x$ rooted at~$c$.
Consider the bitstring representations ot these subtrees, and compute the lexicographically smallest rotation of the string $(-1,y_1,-1,y_2,-1,\ldots,-1,y_k)$ using Booth's algorithm~\cite{MR585391}.
Here $-1$~is an additional symbol that is lexicographically smaller than~0 and~1, ensuring that the lexicographically smallest string rotation starts at a tree boundary.
Let~$\xhat$ be the tree obtained by rooting~$x$ at~$c$ such that the subtrees $y_1,\ldots,y_k$ appear exactly in this lexicographically smallest ordering, and return~$\xhat$.

We are now ready to define the function $\IsFlipTree$.

\textit{The function $\IsFlipTree$:}
Given a tree $x\in T_n$, return~$\false$ if $x$~is a star.
Otherwise compute~$\xhat:=\troot(x)$.
If $x$~has a thin leaf, then let~$x'$ be the tree obtained by rotating~$\xhat$ until it has the form $x'=1100 v$ for some~$v\in D$.
Return~$\true$ if~$x=x'$, and return~$\false$ otherwise.
If $x$~has no thin leaf, then let~$x'$ be the tree obtained by rotating~$\xhat$ until it has the form $x'=1(10)^k 0 v$ for some~$k\geq 2$ and~$v\in D$.
Return~$\true$ if~$x=x'$ and if the condition $v=(10)^l$ implies that~$l\geq k$, and return~$\false$ otherwise.

\subsection{The function \texorpdfstring{$\Init$}{Init}}
\label{sec:init}

It remains to define the function $\Init(n,x,\ell)$ called in line~\ref{line:call-init} of the algorithm $\HamCycle$.
This function moves forward along the Hamilton cycle from the given vertex~$x$ in~$G_n$ and visits all vertices until the first vertex of the form~$z0$ with $z\in D_n$ is encountered.
We then initialize the current vertex as~$y:=z0$, and set the vertex counter~$i$ to the number of vertices visited along the cycle from~$x$ to~$y$.
The parameter $\ell$ is passed, as it might be so small that the termination condition is already reached on this initial path.

The first task is to compute, for the given vertex~$x$, which path~$P_\sigma(y)\, 0$ or~$\ol{\rev}(P_\sigma(y))\, 1$ the vertex~$x$ is contained in.
With this information we can run essentially one iteration of the while-loop of the algorithm $\HamCycle$, after which we reach the first vertex of the form~$z0$ with~$z\in D_n$.

To achieve this, if the last bit of~$x$ is 0, i.e., $x=z0$, then we compute $y\in D_n$ such that $z\in P_\sigma(y)$ as follows:
We consider the point(s) with lowest height on the lattice path~$z$.
If the lowest point of~$z$ is unique, then we partition~$z$ uniquely as
\begin{subequations}
\label{eq:zpartition}
\begin{equation}
\begin{aligned}
z = \begin{cases}
     u_1\, 0 \,u_2\, 0 \ldots u_d\, 0 \,w\, 0 1 \,v_d\, 1 \,v_{d-1}\, 1 \ldots v_1\, 1 \,v & \text{if } z\in B_{2n,n} \enspace, \\
     u_1\, 0 \,u_2\, 0 \ldots u_d\, 0 1 \,w\, 1 \,v_d\, 1 \,v_{d-1}\, 1 \ldots v_1\, 1 \,v & \text{if } z\in B_{2n,n+1}
     \end{cases}
\end{aligned}
\end{equation}
for some~$d\geq 0$ and $u_1,\ldots,u_d,v_1,\ldots,v_d,w\in D$.
If there are at least two lowest points of~$z$, then we partition~$z$ uniquely as
\begin{equation}
\begin{aligned}
z = \begin{cases}
    u_1\, 0 \,u_2\, 0 \ldots u_d\, 0 1 \,w\, 0 \,v_d\, 1 \,v_{d-1}\, 1 \ldots v_1\, 1 \,v & \text{if } z\in B_{2n,n} \enspace, \\
    u_1\, 0 \,u_2\, 0 \ldots u_d\, 1 \,w\, 0 1 \,v_d\, 1 \,v_{d-1}\, 1 \ldots v_1\, 1 \,v & \text{if } z\in B_{2n,n+1}
     \end{cases}
\end{aligned}
\end{equation}
\end{subequations}
for some~$d\geq 0$ and $u_1,\ldots,u_d,v_1,\ldots,v_d,w\in D$.
In all four cases, a straightforward calculation using the definition \eqref{eq:sigma-rec} shows that
\begin{equation*}
  y:=1 \,u_1\, 1 \,u_2\ldots 1 \,u_d\, 1 \,w\, 0 \,v_d\, 0 \,v_{d-1}\, 0 \ldots v_1\, 0 \,v
\end{equation*}
is indeed the first vertex of the path~$P_\sigma(y)$ that contains the vertex~$z$.
In particular, if $z\in D_n$, then~$d=0$ and $y=z=1 w 0 v$.

If the last bit of~$x$ is 1, i.e., $x=z1$, then we compute $y\in D_n$ such that $z\in \ol{\rev}(P_\sigma(y))$ by applying the previous steps to the vertex~$\ol{\rev}(z)$.

For more details how the function $\Init$ is implemented, see our C++ implementation~\cite{cos}.

\section{Correctness of the algorithm}
\label{sec:correctness}

The properties (i) and (ii) of the paths~$\cP_n$ claimed in Section~\ref{sec:paths} were proved in~\cite[Proposition~2]{mlc-short:18}.
Consequently, if we assume for a moment that the function $\IsFlipTree$ always returns~$\false$, then the arguments given in Section~\ref{sec:hamcycle} show that the algorithm $\HamCycle$ correctly follows one cycle of the 2-factor~$\cC_n$ defined in \eqref{eq:2-factor} in the middle levels graph~$G_n$.
Moreover, by the symmetric definition in line~\ref{line:call-tsigma}, for any flippable pair~$(x,y)$ either the flip sequence~$\sigma$ is applied to both~$x$ and~$y$, or the modified flip sequence~$\tsigma$ is applied to both~$x$ and~$y$.
Consequently, by the definition of flippable pairs given in Section~\ref{sec:flippable-pairs}, the algorithm $\HamCycle$ correctly computes a 2-factor in the graph~$G_n$ for \emph{any} Boolean function $\IsFlipTree$ on the set~$T_n$.
It remains to argue that for the particular Boolean function $\IsFlipTree$ defined in the previous section, our algorithm indeed computes a Hamilton cycle.
For this we analyze the structure of the 2-factor~$\cC_n$, which is best described by yet another family of trees.

\begin{figure}
\includegraphics[scale=0.916]{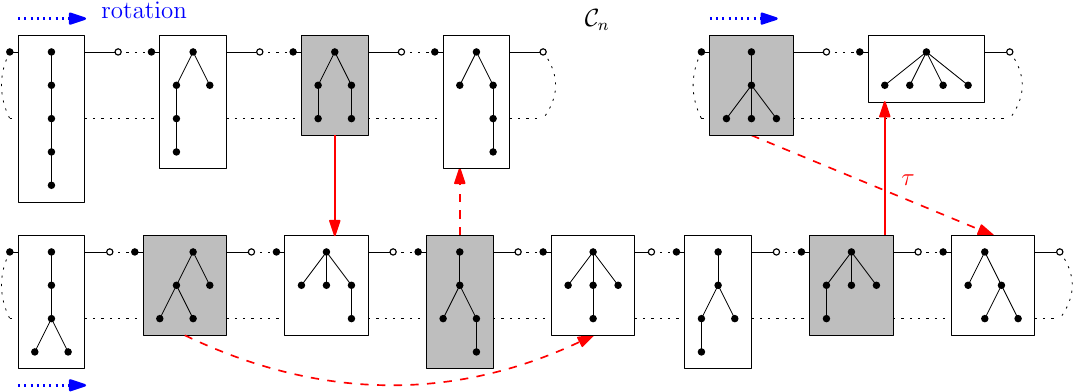}
\caption{Tree rotation along the cycles of the 2-factor~$\cC_n$ (dotted lines), the mapping~$\tau$ (solid and dashed arrows) and the graph~$\cT_n$ for~$n=4$.
The highlighted trees form the set~$T_n$, the preimage of~$\tau$.
The graph~$\cT_n$ has the three cycles of~$\cC_n$ as nodes, and it contains only the two solid directed edges (not the dashed ones), forming a spanning tree on the cycles of~$\cC_n$.
}
\label{fig:rot}
\end{figure}

\textit{Plane trees.}
A~\emph{plane tree} is a tree with a specified cyclic ordering of the neighbors of each vertex.
We think of a plane tree as a tree embedded in a plane, where the neighbors of each vertex appear exactly in the specified ordering in counterclockwise direction around that vertex; see Figure~\ref{fig:t6}.
Equivalently, plane trees arise as equivalence classes of rooted trees under rotation.

It was shown in~\cite[Proposition~2]{mlc-short:18} that for any cycle from~$\cC_n$, if we consider two consecutive vertices of the form~$x0$ and~$y0$ with $x,y\in D_n$ on the cycle and the canonical decomposition of $x=1u0v$, then we have $y=u1v0$.
In terms of rooted trees, we have~$y=\trot(x)$.
Consequently, the set of cycles of~$\cC_n$ is in bijection with the equivalence classes of all rooted trees with $n$~edges under rotation; see Figure~\ref{fig:rot}.
In particular, the number of cycles in the 2-factor is given by the number of plane trees with $n$~edges.

The definition of flippable pairs~$(x,y)$ given in Section~\ref{sec:flippable-pairs} shows that if~$P_\sigma(x)$ and~$P_\sigma(y)$ are contained in two distinct cycles of~$\cC_n$, then replacing these two paths by the flipped paths~$P_\tsigma(x)$ and~$P_\tsigma(y)$ joins the two cycles to a single cycle on the same set of vertices; recall Figure~\ref{fig:c6xy}.
As mentioned before, this replacement operation is equivalent to taking the symmetric difference of the edge sets of~$P_\sigma(x)$ and~$P_\sigma(y)$ with the 6-cycle~$C_6(x,y)$ defined in \eqref{eq:c6xy}.
The following two properties were established in~\cite[Proposition~3]{mlc-short:18}:
For any flippable pairs~$(x,y)$ and~$(x',y')$, the 6-cycles~$C_6(x,y)$ and~$C_6(x',y')$ are edge-disjoint.
Moreover, for any flippable pairs~$(x,y)$ and~$(x,y')$, the two pairs of edges that the two 6-cycles~$C_6(x,y)$ and~$C_6(x,y')$ have in common with the path~$P_\sigma(x)$ are not interleaved, but one pair appears before the other pair along the path.
Consequently, none of the 6-cycles used for these joining operations interfere with each other.

Using these observations, we define an auxiliary graph~$\cT_n$ as follows.
The nodes of~$\cT_n$ are the equivalence classes of rooted trees with $n$~edges under rotation, which can be interpreted as plane trees.
For any flippable pair~$(x,y)$ for which $\IsFlipTree(x)=\true$, we add a directed edge from the equivalence class containing~$x$ to the equivalence class containing~$y=\tau(x)$ to the graph~$\cT_n$.
The graph~$\cT_n$ is shown in Figure~\ref{fig:rot} and Figure~\ref{fig:t6} for~$n=4$ and~$n=6$, respectively.
By what we said before, the nodes of~$\cT_n$ correspond to the cycles of the 2-factor~$\cC_n$, and the edges correspond to the flipped pairs of paths used by the algorithm $\HamCycle$.
To complete the correctness proof for our algorithm, it thus remains to prove the following lemma.

\begin{figure}
\includegraphics[scale=0.916]{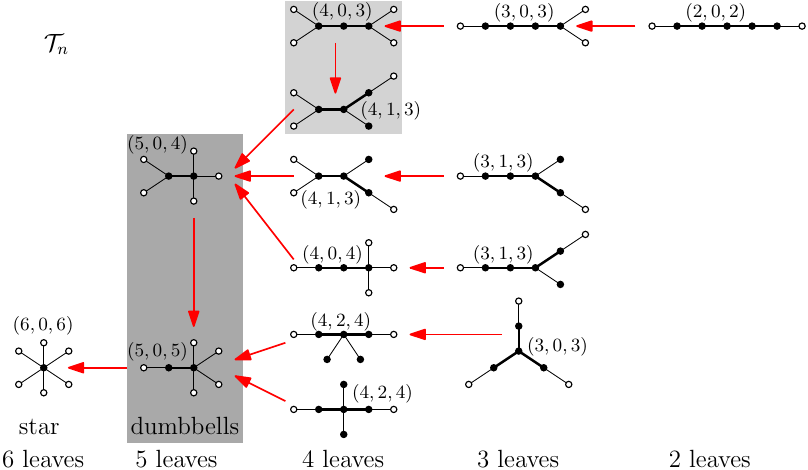}
\caption{The graph~$\cT_n$ for~$n=6$, where only the plane trees corresponding to each node are shown.
The skeleton of each plane tree is drawn with bold edges.
Terminal leaves are drawn white, and non-terminal leaves are drawn black.
The triples assigned to each plane tree are the signatures (number of leaves, number of non-terminal leaves, maximum degree).
The light-gray box highlights an edge of~$\cT_n$ along which the number of leaves stays the same, but the number of non-terminal leaves goes up from~0 to~1.
The dark-gray box highlights all dumbbells and an edge of~$\cT_n$ along which the number of leaves and non-terminal leaves stays the same, but the maximum degree goes up from~4 to~5.
}
\label{fig:t6}
\end{figure}

\begin{lemma}
For any~$n\geq 1$, the graph~$\cT_n$ is a spanning tree.
\end{lemma}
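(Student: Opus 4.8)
The plan is to establish three properties of the directed graph~$\cT_n$ and combine them: \emph{(a)}~every node has out-degree at most~$1$; \emph{(b)}~$\cT_n$ has no directed cycle; and \emph{(c)}~the only node of out-degree~$0$ is the class of the star with~$n$ edges. These give the lemma at once. By~(a) and~(b), following the unique outgoing edge from any node produces a path that never revisits a node and therefore ends at a node of out-degree~$0$; so in a weakly connected component with~$m$ nodes the number~$e$ of edges is at most~$m$ by~(a) and at least~$m-1$ by weak connectivity, while $e$ equals~$m$ minus the number of out-degree-$0$ nodes in the component, so every component contains exactly one such node and has $e=m-1$, hence is a tree. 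Thus $\cT_n$ is a forest, its number of components equals its number of out-degree-$0$ nodes, which by~(c) is~$1$; since the node set of~$\cT_n$ is the set of \emph{all} plane trees with~$n$ edges, $\cT_n$ is a spanning tree.

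To prove~(a), recall that an edge leaving the class~$C$ of a plane tree comes from some $x\in C\cap T_n$ with $\IsFlipTree(x)=\true$. But $\IsFlipTree$ first passes to $\xhat:=\troot(x)$, which depends only on~$C$, and then returns~$\true$ only if~$x$ equals one specific rotation~$x'$ of~$\xhat$ into a normal form: either $1100v$, or $1(10)^k0v$ with $k\ge 2$ and the stated side condition. Since~$x'$ is determined by~$\xhat$ alone (each normal form fixes the root and the entire left-to-right ordering of the tree), at most one representative of~$C$ is a flip tree, so~$C$ has at most one outgoing edge. This also gives one half of~(c): $\IsFlipTree$ returns~$\false$ on every star, and there is, up to rotation, a unique plane tree with~$n$ edges that is a star, so its class has out-degree~$0$.

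For~(b), I would exhibit a potential in a totally ordered set that strictly increases along every edge, which immediately rules out directed cycles. The natural choice is the \emph{signature} $s(C)=(L,L',\Delta)$ -- number of leaves, number of non-terminal leaves, maximum degree -- compared lexicographically, as in Figure~\ref{fig:t6}. An edge sends a flip tree~$x$ to~$\tau(x)$, and by~(a) there are two cases. If~$x$ has a thin leaf then $x=1100v$ and $\tau(x)=1010v$, and one checks that the leftmost child of the root, of degree~$2$ in~$x$, becomes a leaf in~$\tau(x)$, so~$L$ strictly increases. If~$x$ has no thin leaf then $x=1(10)^k0v$ with $k\ge 2$ and $\tau(x)=101(10)^{k-1}0v$; here~$L$ is unchanged, and one checks that~$L'$ strictly increases unless $v=(10)^l$ -- that is, unless~$x$ is a dumbbell -- in which case the side condition of~$\IsFlipTree$ forces $l\ge k$, $L'$ is unchanged, and~$\Delta$ rises from~$l+1$ to~$l+2$. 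Thus~(b) comes down to a finite bookkeeping of how~$\tau$ acts on leaves, non-terminal leaves, and degrees.

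It remains to prove the other half of~(c): every non-star plane tree~$C$ has an outgoing edge, i.e.\ for $\xhat:=\troot(x)$ with $x\in C$ some rotation of~$\xhat$ is accepted by~$\IsFlipTree$. If~$C$ has a thin leaf, root~$C$ so that the support vertex of a thin leaf is the leftmost child of the root and the thin leaf is that vertex's leftmost child; this is a rotation of~$\xhat$ of the form $1100v$, hence its own normal form, hence accepted. If~$C$ has no thin leaf then, being a non-star, it has a vertex of degree~$\ge 3$, and one shows $\xhat$ can be rotated to the form $1(10)^k0v$ with $k\ge 2$; moreover, when $v=(10)^l$ (so~$C$ is a dumbbell), the lexicographic minimality built into~$\troot$ puts the smaller of the two pendant stars on top, so $l\ge k$ and the rotation is again accepted. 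With~(a) this makes the star's class the unique node of out-degree~$0$. The main obstacle I anticipate is exactly the two ``one checks'' steps above: reconciling the procedural description of~$\troot$ (graph centre, together with the lexicographically smallest rotation computed by Booth's algorithm) with the explicit normal forms -- guaranteeing the required rooting is reachable by rotations from~$\xhat$ and is the canonical one, with particular care for trees with two centres and for dumbbells -- and carrying the signature bookkeeping through all subcases, which is what forces the precise notion of ``non-terminal leaf'' and explains why the side condition in~$\IsFlipTree$ takes the form it does.
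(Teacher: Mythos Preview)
Your proposal is correct and takes essentially the same approach as the paper: the same out-degree bound via the canonical rooting in~$\troot$, the same lexicographic signature (number of leaves, number of non-terminal leaves, maximum degree) to exclude directed cycles, and the same case split (thin leaf; no thin leaf, non-dumbbell; dumbbell) to show every non-star class has an outgoing edge---the paper concludes via ``acyclic plus connected'' whereas you conclude via sink-counting, which is equivalent. Your final concern is largely unnecessary: for~(c) you need not match your exhibited rooting to the algorithm's canonical~$x'$, only show that \emph{some} rotation of the required form exists (so that~$x'$ is well-defined and $\IsFlipTree(x')=\true$); the side condition bites only for dumbbells, where your lex-minimality observation is exactly what is needed.
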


\begin{proof}
For the reader's convenience, the following definitions are illustrated in Figure~\ref{fig:t6}.
All these notions apply to rooted trees and to plane trees.
The \emph{skeleton} of a tree is the tree obtained by removing all leaves.
A~leaf of a tree is called \emph{terminal}, if it is adjacent to a leaf of the skeleton.
A~\emph{dumbbell} is a tree in which all but exactly two vertices are leaves.
Equivalently, a dumbbell has a skeleton consisting of a single edge, or a dumbbell is a tree with $n$~edges and $n-1$~leaves.
Note that any tree that is not a star has a skeleton with at least one edge, and any tree that is neither a star nor a dumbbell has a skeleton with at least two edges.
Also note that any thin leaf is a terminal leaf, but not every terminal leaf is thin (consider a dumbbell).

Consider a directed edge in the graph~$\cT_n$ that arises from two rooted trees~$(x,y)$ with~$y=\tau(x)$ and $\IsFlipTree(x)=\true$.
By the definition of the function $\IsFlipTree$, $x$~is not a star, so in particular we have~$n\geq 3$.
Moreover, exactly one of the following two conditions holds.
Case~(a): $x=1100v$ for some~$v\in D$.
In this case, we have $y=\tau(x)=1010v$, and as~$v$ is nonempty by the condition~$n\geq 3$, we obtain that $y$~has one more leaf than~$x$.
Case~(b): $x=1(10)^k0v$ for some~$k\geq 2$ and~$v\in D$, and $v=(10)^l$ implies that~$l\geq k$.
In this case, as $x$~is not a star, the subtree~$v$ has at least one edge, so the root of~$x$ is not a leaf.
Moreover, all leaves of~$x$ in the leftmost subtree are terminal leaves.
We distinguish two subcases.
Case~(b1): $v$ is not a star rooted at the center, i.e., $x$~is not a dumbbell.
In this case, one of the terminal leaves in the leftmost subtree of~$x$ becomes a non-terminal leaf in $y=\tau(x)=101(10)^{k-1}0v$, so~$y$ and~$x$ have the same number of leaves, but~$y$ has one more non-terminal leaf.
Case~(b2): $v$ is a star rooted at the center, i.e., $x$~is a dumbbell and $v=(10)^l$ for some~$l\geq k$.
In this case, the root of~$x$ is a vertex that has maximum degree~$l+1$, whereas the root of the dumbbell $y=\tau(x)=101(10)^{k-1}0(10)^l$ has degree~$l+2$, so both dumbbells~$y$ and~$x$ have the same number of leaves and non-terminal leaves, but the maximum degree of~$y$ is one higher than that of~$x$.

To every plane tree~$T$ we therefore assign a \emph{signature} $s(T):=(\ell,t,d)$, where~$\ell$ is the number of leaves of~$T$, $t$~is the number of non-terminal leaves of~$T$, and $d$~is the maximum degree of~$T$; see Figure~\ref{fig:t6}.
By our observations from before, for any directed edge~$(T,T')$ between two plane trees~$T$ and~$T'$ in~$\cT_n$, when comparing $s(T)=:(\ell,t,d)$ and $s(T')=:(\ell',t',d')$, then either $\ell<\ell'$ in case~(a) from before, or $\ell=\ell'$ and $t<t'$ in case~(b1) from before, or $(\ell,t)=(\ell',t')$ and $d<d'$ in case~(b2) from before.
It follows that $\cT_n$~does not have any cycles where all edges have the same orientation.
In particular, $\cT_n$~has no loops.

Moreover, as a consequence of the initial canonical rooting performed by the call to $\troot$, the function $\IsFlipTree$ returns~$\true$ for at most one tree from each equivalence class of rooted trees under rotation.
This implies that each node of~$\cT_n$ has out-degree at most~1.
Consequently, $\cT_n$~does not have a cycle where the edges have different orientations, as such a cycle would have a node with out-degree~2.
Combining these observations shows that $\cT_n$~is acyclic.

It remains to prove that $\cT_n$~is connected.
For this we show how to move from any plane tree~$T$ along the edges of~$\cT_n$ to the star with $n$~edges; see Figure~\ref{fig:t6}.
We assume that $T$~is not the star with $n$~edges, so in particular~$n\geq 3$.
If $T$~has a thin leaf, then we can clearly root~$T$ so that the rooted tree has the form $1100v$ for some~$v\in D$.
Consequently, there exists an edge in~$\cT_n$ that leads from~$T$ to a tree~$T'$ that has one more leaf than~$T$.
If $T$~has no thin leaf, then we can root~$T$ so that the rooted tree has the form $1(10)^k0v$ for some~$k\geq 2$ and~$v\in D$, and $v=(10)^l$ implies that~$l\geq k$.
To see this we distinguish two cases.
If $T$~is not a dumbbell, then the skeleton of~$T$ has at least two edges, so rooting~$T$ at any vertex in distance~1 from a leaf of the skeleton yields a rooted tree of the form $1(10)^k0v$ for some~$k\geq 2$ and~$v\in D$ where $v$~is not a star rooted at the center.
On the other hand, if $T$~is a dumbbell, then we can root~$T$ at a vertex of maximum degree so that the rooted tree has the form $1(10)^kv$ for some~$k\geq 2$ and $v=(10)^l$ with~$l\geq k$.
Consequently, there exists an edge in~$\cT_n$ that leads from~$T$ to a tree~$T'$ that has the same number of leaves, and either one more non-terminal leaf, or the same number of non-terminal leaves, but a maximum degree that is one higher than that of~$T$.
We repeat this argument, following directed edges of~$\cT_n$, until we arrive at the star with $n$~edges.

We have shown that $\cT_n$~is acyclic and connected, so it is indeed a spanning tree.
\end{proof}

\section{Running time and space requirements}
\label{sec:running-time}

\subsection{Running time}

For any $x\in D_n$, the flip sequence~$\sigma(x)$ can be computed in linear time.
To achieve this, we precompute an array of bidirectional pointers below the Dyck subpaths of~$x$ between corresponding pairs of an $\upstep$-step and $\downstep$-step on the same height; see Figure~\ref{fig:pointers}.
Using these pointers, each canonical decomposition operation encountered in the recursion \eqref{eq:sigma-rec} can be performed in constant time, so that the overall running time of the recursion is~$\cO(n)$.
Clearly, the sequence~$\tsigma(x)$ can also be computed in time~$\cO(n)$ by modifying the sequence~$\sigma(x)$ as described in \eqref{eq:tsigma} in constantly many positions.
Obviously, the functions~$\ol{\rev}(u)$ and~$\ol{\rev}(v)$ called in line~\ref{line:call-sigma2} can also be computed in time~$\cO(n)$.

To compute the functions $\troot$ and $\IsFlipTree$, we first convert the given bitstring $x\in D_n$ to a tree in adjacency list representation, which can clearly be done in time~$\cO(n)$; recall the correspondence between Dyck paths and rooted trees from Figure~\ref{fig:bij}.
The adjacency list representation allows us to compute the center vertex/vertices in linear time by removing leaves in rounds until only a single vertex or a single edge is left (see~\cite{DBLP:books/daglib/0022194}).
Moreover, it allows us to perform each rotation operation $\trot$ in constant time, and a full tree rotation in time~$\cO(n)$.
Booth's algorithm to compute the lexicographically smallest string rotation also runs in linear time~\cite{MR585391}.
This gives the bound~$\cO(n)$ for the time spent in the functions $\troot$ and $\IsFlipTree$.

It was shown in~\cite[Proposition~2]{mlc-short:18} that the distance between any two consecutive vertices of the form~$x0$ and~$y0$ with $x,y\in D_n$ on a cycle of the 2-factor~$\cC_n$ is exactly~$4n+2$.
Moreover, replacing a path~$P_\sigma(x)$ in the first set on the right hand side of \eqref{eq:2-factor} by the path~$P_\tsigma(x)$ does not change this distance; recall Figure~\ref{fig:c6xy}.
It follows that in each iteration of the while-loop of our algorithm $\HamCycle$, exactly~$4n+2$ vertices are visited.
Combining this with the time bounds~$\cO(n)$ derived for the functions~$\sigma$, $\tsigma$, $\ol{\rev}$ and $\IsFlipTree$ that are called once or twice during each iteration of the while-loop, we conclude that the while-loop takes time~$\cO(\ell+n)$ to visit $\ell$~vertices of the Hamilton cycle.

\begin{figure}
\centering
\PSforPDF{
 \psfrag{x}{$x\in D_n$}
 \includegraphics{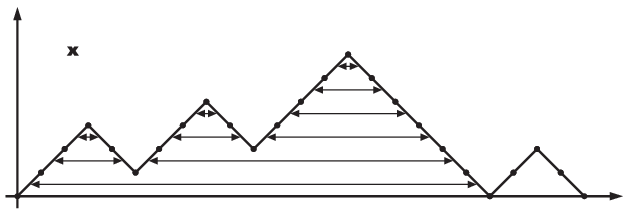}
}
\caption{Auxiliary pointers to compute the flip sequence~$\sigma(x)$ for any $x\in D_n$ in time~$\cO(n)$.
The lattice path~$x$ is the same as in Figure~\ref{fig:sigma}, and the resulting sequence~$\sigma(x)$ is shown in that figure.}
\label{fig:pointers}
\end{figure}

The function $\Init$ takes time~$\cO(n)$, as the partitions \eqref{eq:zpartition} can be computed in linear time, and as we visit at most linearly many vertices in this function (every path in~$\cP_n$ has only length~$\cO(n)$).

Combining the time bounds~$\cO(n)$ for the initialization phase and the time~$\cO(\ell+n)$ spent in the while-loop, we obtain the claimed overall bound~$\cO(\ell+n)$ for the algorithm $\HamCycle$.

\subsection{Space requirements}

Throughout our algorithm, we only store constantly many bitstrings of length~$2n$, rooted trees with $n$~edges, and flip sequences of length at most $4n+2=\cO(n)$, proving that the entire space needed is~$\cO(n)$.

\section*{Acknowledgements}

We thank the anonymous referee for the careful reading and many helpful comments that considerably improved the readability of this manuscript.

\bibliographystyle{alpha}
\bibliography{refs}

\begin{thebibliography}{LRvBR93}

\bibitem[BER76]{MR0424386}
J.~Bitner, G.~Ehrlich, and E.~Reingold.
\newblock Efficient generation of the binary reflected {G}ray code and its
  applications.
\newblock {\em Comm. ACM}, 19(9):517--521, 1976.

\bibitem[Boo80]{MR585391}
K.~S. Booth.
\newblock Lexicographically least circular substrings.
\newblock {\em Inform. Process. Lett.}, 10(4-5):240--242, 1980.

\bibitem[BW84]{MR737262}
M.~Buck and D.~Wiedemann.
\newblock Gray codes with restricted density.
\newblock {\em Discrete Math.}, 48(2-3):163--171, 1984.

\bibitem[Che00]{MR1778200}
Y.~Chen.
\newblock Kneser graphs are {H}amiltonian for {$n\geq 3k$}.
\newblock {\em J. Combin. Theory Ser. B}, 80(1):69--79, 2000.

\bibitem[Che03]{MR1999733}
Y.~Chen.
\newblock Triangle-free {H}amiltonian {K}neser graphs.
\newblock {\em J. Combin. Theory Ser. B}, 89(1):1--16, 2003.

\bibitem[cos]{cos}
The Combinatorial Object Server: \url{http://combos.org}.

\bibitem[Cum66]{MR0245357}
R.~L. Cummins.
\newblock Hamilton circuits in tree graphs.
\newblock {\em IEEE Trans. Circuit Theory}, CT-13:82--90, 1966.

\bibitem[Der75]{MR0502206}
N.~Dershowitz.
\newblock A simplified loop-free algorithm for generating permutations.
\newblock {\em Nordisk Tidskr. Informationsbehandling (BIT)}, 15(2):158--164,
  1975.

\bibitem[DG12]{MR2858033}
P.~Diaconis and R.~Graham.
\newblock {\em Magical mathematics}.
\newblock Princeton University Press, Princeton, NJ, 2012.
\newblock The mathematical ideas that animate great magic tricks, With a
  foreword by Martin Gardner.

\bibitem[DKS94]{MR1268348}
D.~A. Duffus, H.~A. Kierstead, and H.~S. Snevily.
\newblock An explicit {$1$}-factorization in the middle of the {B}oolean
  lattice.
\newblock {\em J. Combin. Theory Ser. A}, 65(2):334--342, 1994.

\bibitem[DSW88]{MR962223}
D.~Duffus, B.~Sands, and R.~Woodrow.
\newblock Lexicographic matchings cannot form {H}amiltonian cycles.
\newblock {\em Order}, 5(2):149--161, 1988.

\bibitem[Ehr73]{MR0366085}
G.~Ehrlich.
\newblock Loopless algorithms for generating permutations, combinations, and
  other combinatorial configurations.
\newblock {\em J. Assoc. Comput. Mach.}, 20:500--513, 1973.

\bibitem[EHR84]{MR821383}
P.~Eades, M.~Hickey, and R.~C. Read.
\newblock Some {H}amilton paths and a minimal change algorithm.
\newblock {\em J. Assoc. Comput. Mach.}, 31(1):19--29, 1984.

\bibitem[EM84]{MR782221}
P.~Eades and B.~McKay.
\newblock An algorithm for generating subsets of fixed size with a strong
  minimal change property.
\newblock {\em Inform. Process. Lett.}, 19(3):131--133, 1984.

\bibitem[FT95]{MR1350586}
S.~Felsner and W.~T. Trotter.
\newblock Colorings of diagrams of interval orders and {$\alpha$}-sequences of
  sets.
\newblock {\em Discrete Math.}, 144(1-3):23--31, 1995.
\newblock Combinatorics of ordered sets (Oberwolfach, 1991).

\bibitem[GM18]{MR3758308}
P.~Gregor and T.~M\"utze.
\newblock Trimming and gluing {G}ray codes.
\newblock {\em Theoret. Comput. Sci.}, 714:74--95, 2018.

\bibitem[GMN18]{mlc-short:18}
P.~Gregor, T.~M{\"u}tze, and J.~Nummenpalo.
\newblock A short proof of the middle levels theorem.
\newblock {\em Discrete Analysis}, 8:12 pp., 2018.

\bibitem[Gow17]{MR3584100}
W.~T. Gowers.
\newblock Probabilistic combinatorics and the recent work of {P}eter {K}eevash.
\newblock {\em Bull. Amer. Math. Soc. (N.S.)}, 54(1):107--116, 2017.

\bibitem[Gra53]{gray:patent}
F.~Gray.
\newblock Pulse code communication, 1953.
\newblock March 17, (filed Nov. 1947). U.S. Patent 2,632,058.

\bibitem[G{\v{S}}10]{MR2609124}
P.~Gregor and R.~{\v{S}}krekovski.
\newblock On generalized middle-level problem.
\newblock {\em Inform. Sci.}, 180(12):2448--2457, 2010.

\bibitem[Hav83]{MR737021}
I.~Havel.
\newblock Semipaths in directed cubes.
\newblock In {\em Graphs and other combinatorial topics ({P}rague, 1982)},
  volume~59 of {\em Teubner-Texte Math.}, pages 101--108. Teubner, Leipzig,
  1983.

\bibitem[HH72]{MR0307952}
C.~A. Holzmann and F.~Harary.
\newblock On the tree graph of a matroid.
\newblock {\em SIAM J. Appl. Math.}, 22:187--193, 1972.

\bibitem[HKRR05]{MR2195731}
P.~Hor\'ak, T.~Kaiser, M.~Rosenfeld, and Z.~Ryj\'a\v{c}ek.
\newblock The prism over the middle-levels graph is {H}amiltonian.
\newblock {\em Order}, 22(1):73--81, 2005.

\bibitem[HR16]{DBLP:conf/fun/HerterR16}
F.~Herter and G.~Rote.
\newblock Loopless {G}ray code enumeration and the tower of {B}ucharest.
\newblock In {\em 8th International Conference on Fun with Algorithms, {FUN}
  2016, June 8-10, 2016, La Maddalena, Italy}, pages 19:1--19:19, 2016.

\bibitem[Hur94]{MR1271867}
G.~Hurlbert.
\newblock The antipodal layers problem.
\newblock {\em Discrete Math.}, 128(1-3):237--245, 1994.

\bibitem[Joh63]{MR0159764}
S.~M. Johnson.
\newblock Generation of permutations by adjacent transposition.
\newblock {\em Math. Comp.}, 17:282--285, 1963.

\bibitem[Joh04]{MR2046083}
J.~R. Johnson.
\newblock Long cycles in the middle two layers of the discrete cube.
\newblock {\em J. Combin. Theory Ser. A}, 105(2):255--271, 2004.

\bibitem[Kam67]{MR0246789}
T.~Kamae.
\newblock The existence of a {H}amilton circuit in a tree graph.
\newblock {\em IEEE Trans. Circuit Theory}, CT-14:279--283, 1967.

\bibitem[Knu11]{MR3444818}
D.~E. Knuth.
\newblock {\em The Art of Computer Programming. {V}ol. 4{A}. {C}ombinatorial
  Algorithms. {P}art 1}.
\newblock Addison-Wesley, Upper Saddle River, NJ, 2011.

\bibitem[KT88]{MR962224}
H.~A. Kierstead and W.~T. Trotter.
\newblock Explicit matchings in the middle levels of the {B}oolean lattice.
\newblock {\em Order}, 5(2):163--171, 1988.

\bibitem[LRvBR93]{MR1239499}
J.~M. Lucas, D.~Roelants~van Baronaigien, and F.~Ruskey.
\newblock On rotations and the generation of binary trees.
\newblock {\em J. Algorithms}, 15(3):343--366, 1993.

\bibitem[Luc87]{MR920505}
J.~M. Lucas.
\newblock The rotation graph of binary trees is {H}amiltonian.
\newblock {\em J. Algorithms}, 8(4):503--535, 1987.

\bibitem[MN17]{DBLP:conf/soda/MutzeN17}
T.~M{\"{u}}tze and J.~Nummenpalo.
\newblock A constant-time algorithm for middle levels {G}ray codes.
\newblock In {\em Proceedings of the Twenty-Eighth Annual {ACM-SIAM} Symposium
  on Discrete Algorithms, {SODA} 2017, Barcelona, Spain, Hotel Porta Fira,
  January 16-19}, pages 2238--2253, 2017.

\bibitem[MN18]{muetze-nummenpalo:18}
T.~M{\"u}tze and J.~Nummenpalo.
\newblock Efficient computation of middle levels {G}ray codes.
\newblock {\em ACM Trans. Algorithms (TALG)}, 14(2):29 pp., 2018.
\newblock An extended abstract appeared in the Proceedings of the European
  Symposium on Algorithms ({ESA} 2015).

\bibitem[MS17]{MR3759914}
T.~M\"utze and P.~Su.
\newblock Bipartite {K}neser graphs are {H}amiltonian.
\newblock {\em Combinatorica}, 37(6):1207--1219, 2017.

\bibitem[MSW18]{MR3738156}
T.~M\"utze, C.~Standke, and V.~Wiechert.
\newblock A minimum-change version of the {C}hung-{F}eller theorem for {D}yck
  paths.
\newblock {\em European J. Combin.}, 69:260--275, 2018.

\bibitem[M{\"u}t16]{MR3483129}
T.~M{\"u}tze.
\newblock Proof of the middle levels conjecture.
\newblock {\em Proc. London Math. Soc.}, 112(4):677--713, 2016.

\bibitem[MW12]{MR2946389}
T.~M{\"u}tze and F.~Weber.
\newblock Construction of 2-factors in the middle layer of the discrete cube.
\newblock {\em J. Combin. Theory Ser. A}, 119(8):1832--1855, 2012.

\bibitem[NW75]{MR0396274}
A.~Nijenhuis and H.~S. Wilf.
\newblock {\em Combinatorial algorithms}.
\newblock Academic Press, New York-London, 1975.
\newblock Computer Science and Applied Mathematics.

\bibitem[Rus88]{MR936104}
F.~Ruskey.
\newblock Adjacent interchange generation of combinations.
\newblock {\em J. Algorithms}, 9(2):162--180, 1988.

\bibitem[SA11]{shimada-amano}
M.~Shimada and K.~Amano.
\newblock A note on the middle levels conjecture.
\newblock {\it arXiv:0912.4564}, September 2011.

\bibitem[Sav93]{savage:93}
C.~D. Savage.
\newblock Long cycles in the middle two levels of the {B}oolean lattice.
\newblock {\em Ars Combin.}, 35-A:97--108, 1993.

\bibitem[Sav97]{MR1491049}
C.~D. Savage.
\newblock A survey of combinatorial {G}ray codes.
\newblock {\em SIAM Rev.}, 39(4):605--629, 1997.

\bibitem[Sed77]{MR0464682}
R.~Sedgewick.
\newblock Permutation generation methods.
\newblock {\em Comput. Surveys}, 9(2):137--164, 1977.

\bibitem[Sim91]{MR1152123}
J.~E. Simpson.
\newblock Hamiltonian bipartite graphs.
\newblock In {\em Proceedings of the {T}wenty-second {S}outheastern
  {C}onference on {C}ombinatorics, {G}raph {T}heory, and {C}omputing ({B}aton
  {R}ouge, {LA}, 1991)}, volume~85, pages 97--110, 1991.

\bibitem[Ski08]{DBLP:books/daglib/0022194}
S.~Skiena.
\newblock {\em The Algorithm Design Manual {(2.} ed.)}.
\newblock Springer, 2008.

\bibitem[SSS09]{MR2548541}
I.~Shields, B.~Shields, and C.~D. Savage.
\newblock An update on the middle levels problem.
\newblock {\em Discrete Math.}, 309(17):5271--5277, 2009.

\bibitem[Sta15]{MR3467982}
R.~P. Stanley.
\newblock {\em Catalan numbers}.
\newblock Cambridge University Press, New York, 2015.

\bibitem[SW95]{MR1329390}
C.~D. Savage and P.~Winkler.
\newblock Monotone {G}ray codes and the middle levels problem.
\newblock {\em J. Combin. Theory Ser. A}, 70(2):230--248, 1995.

\bibitem[Tro62]{Trotter:1962}
H.~Trotter.
\newblock Algorithm 115: Perm.
\newblock {\em Commun. ACM}, 5(8):434--435, August 1962.

\bibitem[Wil89]{MR993775}
H.~S. Wilf.
\newblock {\em Combinatorial algorithms: an update}, volume~55 of {\em CBMS-NSF
  Regional Conference Series in Applied Mathematics}.
\newblock Society for Industrial and Applied Mathematics (SIAM), Philadelphia,
  PA, 1989.

\bibitem[Win04]{MR2034896}
P.~Winkler.
\newblock {\em Mathematical puzzles: a connoisseur's collection}.
\newblock A K Peters, Ltd., Natick, MA, 2004.

\bibitem[Zer85]{MR796208}
D.~Zerling.
\newblock Generating binary trees using rotations.
\newblock {\em J. Assoc. Comput. Mach.}, 32(3):694--701, 1985.

\end{thebibliography}

\end{document}